\documentclass{article} 
\usepackage[preprint]{neurips_2026}

\usepackage{amsmath,amsfonts,bm}

\def\eqref#1{equation~\ref{#1}}

\def\1{\bm{1}}

\DeclareMathAlphabet{\mathsfit}{\encodingdefault}{\sfdefault}{m}{sl}
\SetMathAlphabet{\mathsfit}{bold}{\encodingdefault}{\sfdefault}{bx}{n}

\usepackage[utf8]{inputenc} 
\usepackage[T1]{fontenc}    
\usepackage{hyperref}       
\usepackage{url}            
\usepackage{booktabs}       
\usepackage{amsfonts}       
\usepackage{nicefrac}       
\usepackage{microtype}      
\usepackage{xcolor}         

\usepackage{amsmath}
\usepackage{amsthm}
\usepackage{algorithm}
\usepackage{algorithmic}
\usepackage{amssymb}
\usepackage{graphicx}
\usepackage{comment}
\usepackage{multicol}
\usepackage{multirow}
\usepackage{colortbl}
\usepackage{xcolor}
\usepackage{enumitem}
\usepackage{newtxtext}

\newtheorem{theorem}{Theorem}

\newtheorem{corollary}{Corollary}

\title{PrivCert: Certifying Statement Support under Differential Privacy}

\author{Tsubasa Takahashi\thanks{\texttt{tsubasa.takahashi@acompany-ac.com}}\qquad \qquad Takumi Hiraoka\\
\\
Acompany Co., Ltd.\\
}

\begin{document}

\maketitle

\begin{abstract}
Differentially private (DP) text generation can protect individual records, but privacy alone does not specify what evidence a released statement carries about the underlying data.
We identify this as an \emph{evidence gap}: a private report may contain plausible claims without indicating whether they are strongly supported by the private dataset.
We introduce \emph{\textsc{PrivCert}}, a framework for privacy-preserving reporting that makes statement support explicit through privacy-preserving certificates and emit-or-abstain decisions.
As a canonical instantiation, \textsc{PrivCert}-PF (Proposal-and-Filter) separates data-independent candidate discovery from private support certification, emitting only statements whose support passes a private evidence test.
We provide theoretical grounding for this framework by characterizing the limits of implicit evidence under DP, deriving a sharp privacy--honesty frontier for single-statement certification, and establishing a worst-case cost for fine-grained multi-statement certification.
Experiments on synthetic tasks and TAB, WildChat, and Yelp show that explicit certification maintains low unsupported emission, while free-text DP baselines frequently produce low-support claims under the same declared support semantics.
We further show that the \textsc{PrivCert} contract can be realized with histogram, sparse-vector, and Gaussian mechanisms, and use DP synthetic data to illustrate an important boundary: support in a private proxy does not automatically certify support in the original data.
Together, these results position privacy-preserving reporting as an evidence-design problem: not only how to generate private text, but what a private report can substantiate about its underlying data.
\end{abstract}

\section{Introduction}
\label{sec:introduction}

Large language models (LLMs) are increasingly used to produce reports and insights from sensitive data, including retrieval-augmented systems over confidential corpora~\citep{lewis2020retrival}, enterprise assistants grounded in private organizational data~\citep{microsoft2026copilotprivacy}, and analytics pipelines over collections of user records~\citep{tamkin2025clio,
liu2025urania,cheu2025toward}.
Some such systems additionally run under confidential computing, where intermediate computations and logs may be inaccessible to deployers or auditors~\citep{apple2024pcc,ccc2023technical,google2025privateaiTechnical}.
This can create an \emph{observability blackout}, in which sensitive computation is protected while intermediate traces needed for inspection or auditing remain hidden.
In these settings, downstream readers may observe only the released report, making it important to know not only what it says, but how strongly its claims are supported by the underlying data.

Differential privacy (DP) provides a principled way to limit the influence of individual records on released outputs~\citep{dwork2006calibrating, dwork2014algorithmic}.
Recent work extends DP to natural-language generation through private training, synthetic-data generation, and inference-time private decoding~\citep{abadi2016deep,vinod2025invisibleink,xie2024differentially, yue2023synthetic}.
These methods protect individual records, but privacy alone does not specify what evidence a generated statement carries about the dataset.

\begin{figure}[t]
    \centering
    \includegraphics[width=.985\linewidth]{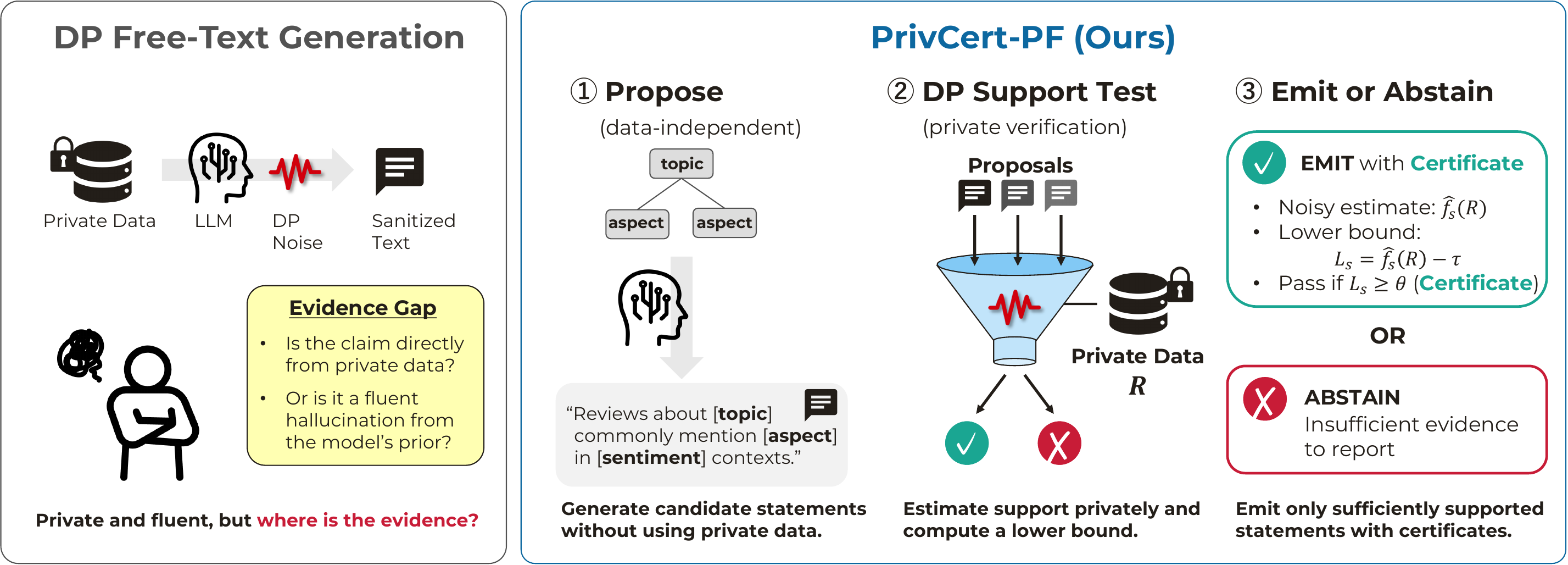}
    \caption{\textbf{From private text to private reporting with explicit support evidence.} DP free-text generation can produce fluent private text without exposing the evidential basis of its claims.
    \textsc{PrivCert}-PF (Proposal-and-Filter), one instantiation of \textsc{PrivCert}, separates candidate discovery from DP support verification and emits only statements that pass a privacy-preserving evidence test.}
    \label{fig:proposal-filter}
\end{figure}

A DP mechanism may output a plausible statement without indicating whether it reflects a widespread pattern in the private dataset, a rare observation amplified by randomness, or information inherited from the model.
For example, from ``Users frequently discuss medication side effects,'' a reader cannot tell how strongly the private corpus supports the claim.
We call this mismatch between private text and explicit support evidence the \emph{evidence gap}.

We formulate privacy-preserving reporting with explicit evidence and introduce \textsc{PrivCert}, a framework in which released statements carry privacy-preserving support evidence and are subject to emit-or-abstain decisions.
For a statement \(s\), let \(f_s(R)\in[0,1]\) denote its support in the private dataset \(R\).
\textsc{PrivCert} separates candidate discovery, private evidence construction, and release, without prescribing a particular candidate generator or DP primitive.

To ground this framework, we characterize limits on inferring support from ordinary DP outputs and derive a sharp privacy--honesty frontier for single-statement certification.
We instantiate the framework with \textsc{PrivCert}-PF (Proposal-and-Filter), which combines data-independent proposals with private support certification, and show that its sharply calibrated half-emission threshold is near-optimal.
We further characterize the worst-case cost of fine-grained multi-statement certification and evaluate alternative realizations of the same reporting contract.

\textbf{Contributions.}
We make four main contributions:
\begin{enumerate}[nosep, leftmargin=0.5cm]

\item We formulate the \emph{evidence gap} in privacy-preserving text generation and introduce \textsc{PrivCert}, a reporting framework that makes statement support explicit through privacy-preserving certificates and emit-or-abstain decisions (Section~\ref{sec:framework}).

\item We characterize fundamental limits on statement certification under DP: ordinary DP outputs provide only locally limited implicit evidence about support, and explicit certification is constrained by a sharp privacy--honesty frontier (Section~\ref{sec:limits}).

\item We introduce \textsc{PrivCert}-PF, a canonical proposal-and-filter instantiation that separates data-independent candidate discovery from private support certification, and characterize its honesty, coverage, and multi-statement scaling.
Under sharp calibration, its single-statement half-emission threshold is within one supporting record of optimal (Section~\ref{sec:method}).

\item Experiments on synthetic tasks and TAB, WildChat, and Yelp empirically characterize the evidence gap and explicit certification behavior, compare alternative \textsc{PrivCert} realizations, and identify a boundary between direct and proxy-data certification (Section~\ref{sec:experiments}).

\end{enumerate}

Together, these results shift privacy-preserving reporting from merely producing private text toward making its evidential basis explicit.
They reframe the central question from \emph{how to generate private text} to \emph{what a private report can certify about the data, and at what privacy cost}.
\section{Related Work}
\label{sec:related}

\textbf{Differentially private text generation.}
DP text generation has been studied through private training \citep{abadi2016deep}, inference-time decoding \citep{vinod2025invisibleink,thareja2025dp}, synthetic text generation \citep{xie2024differentially,yue2023synthetic}, and structured-text evaluation~\citep{wangstruct}.
These works focus on generating useful private text; we instead ask what evidence a released statement provides that the private data support it.

\textbf{Private aggregate reporting.}
CLIO summarizes sufficiently large conversation clusters \citep{tamkin2025clio}, while Urania provides formal DP guarantees through private clustering, partition selection, and histogram release \citep{liu2025urania}.
In contrast, we treat candidate statements themselves as objects whose support is privately verified against the dataset.

\textbf{Faithfulness and abstention.}
Faithfulness work studies whether generated statements are supported by source material~\citep{llm_hallucination,faithfulness_summarization}, while selective prediction and conformal methods use abstention under uncertainty \citep{geifman2017selective,conformal_prediction}.
Here, the source records are private, so abstention is driven by a privacy-preserving support test rather than direct verification.

\textbf{Private query answering, thresholding, and optimal mechanisms.}
Statement support \(f_s(R)\) is a statistical query, connecting our verifier to classical DP query answering~\citep{blum2008learning,hardt2010geometry,smith2011privacy,hay2010boosting}, sparse vector techniques~\citep{dwork2014algorithmic,lyu2017svt}, optimal private hypothesis testing~\citep{awan2018differentially}, and staircase mechanisms~\citep{geng2016optimal}.
Our focus is the reporting-specific interface between statements, support certificates, and abstention; our sharp frontier is induced by the dataset-wise one-sided honesty contract as empirical support crosses the reporting threshold.
\section{PrivCert: Private Reporting with Explicit Support Evidence}
\label{sec:framework}

We study privacy-preserving reporting about a private dataset \(R=\{r_1,\dots,r_n\}\) under a reporting task \(q\).
Our goal is to formalize not only whether a released report is differentially private, but also what evidence its statements carry about the underlying data.

A mechanism \(\mathcal A\) is \(\varepsilon\)-DP if, for each fixed \(q\),
\[
\Pr[\mathcal A(q,R)\in E]
\le
e^\varepsilon
\Pr[\mathcal A(q,R')\in E]
\]
for all adjacent datasets \(R\sim R'\).
By group privacy, datasets at replacement distance at most \(k\) satisfy the same inequality with \(e^{k\varepsilon}\).
This property underlies the lower bounds in Section~\ref{sec:limits}.

\textbf{Statements and evidence semantics.}
A statement \(s\in\mathcal S\) is a semantic claim about the dataset.
Its support is represented by \(f_s(R)\in[0,1]\).
For a fixed support predicate \(\sigma_s(r)\in\{0,1\}\), we use
\begin{equation}
\label{eq:support_classifier}
f_s(R)
=
\frac{1}{|R|}
\sum_{r\in R}\sigma_s(r).
\end{equation}
This empirical-frequency support has replace-one sensitivity at most
\(1/|R|\) when \(\sigma_s\) is fixed independently of \(R\).
More general semantic matchers may be used, provided their sensitivity and any private adaptation are included in the privacy accounting.

A reporting threshold \(\theta\in(0,1)\) defines insufficient support:
\[
f_s(R)<\theta
\quad\Longrightarrow\quad
s\text{ is unsupported}.
\]
Thus \(f_s\) specifies the operational notion of evidence, while \(\theta\) specifies how much evidence is sufficient for reporting.
The guarantees below are relative to this specified support semantics; DP does not itself establish matcher-independent semantic truth.

\textbf{\textsc{PrivCert} interface.}
A conventional free-text mechanism releases only
\(Y\sim\mathcal A(q,R)\), without explicitly stating how strongly the private dataset supports the claims appearing in \(Y\).
A \textsc{PrivCert} mechanism associates each evaluated candidate with
\(
(s_j,c_j,a_j),
\)
where \(s_j\) is a candidate statement, \(c_j\) is a privacy-preserving support certificate, and \(a_j\in\{\mathrm{emit},\mathrm{abstain}\}\) is the reporting decision.
The released report may expose emitted statements and their certificates, while abstention records that the available evidence was insufficient for release.
The interface separates candidate discovery, private evidence construction, and the final release decision; it does not prescribe a particular mechanism for any of these roles.

\textbf{Framework scope.}
\textsc{PrivCert} is a reporting contract rather than a new DP primitive.
Candidate discovery and evidence construction may use different mechanisms; any private access to \(R\) must be included in the privacy accounting, and the resulting certificate must satisfy the stated validity condition.
We state the basic results for pure DP for clarity, while the framework also allows approximate-DP mechanisms.
\textsc{PrivCert}-PF in Section~\ref{sec:method} is one direct-certification instantiation of this contract.

For the one-sided certificates studied here, \(c_s\) is valid at failure probability \(\beta\) if
\[
\Pr[c_s>f_s(R)]\le\beta.
\]
Equivalently, except with probability \(\beta\), the true dataset support is at least as large as the reported certificate \(c_s\).
Under the threshold reporting rule used by the instantiations studied here,
\(
a_s=\mathrm{emit}
\Longrightarrow
c_s\ge\theta,
\)
validity implies
\begin{equation}
\label{eq:framework-honesty}
f_s(R)<\theta
\quad\Longrightarrow\quad
\Pr[a_s=\mathrm{emit}]
\le\beta.
\end{equation}
This is the basic evidential contract (\textsc{PrivCert} contract): an unsupported statement is emitted only with controlled probability.
For multi-statement reports, the evidential error scope must also be
specified: per-candidate validity does not by itself control the probability that any unsupported statement appears in the complete report.

\section{Privacy Limits and Optimal Statement Certification}
\label{sec:limits}

We study two limits on statement evidence under differential privacy.
First, we ask how reliably support can be inferred from a single DP output.
Second, under the explicit one-sided honesty contract of
Section~\ref{sec:framework}, we characterize the largest possible emission
probability of any private certifier.
The latter yields a sharp frontier used to establish the near-optimality of
\textsc{PrivCert}-PF in Section~\ref{sec:method}.
Proofs are given in Appendix~\ref{app:proofs-limits}.

\subsection{Limits of Implicit Evidence}
\label{subsec:implicit-limits}

Fix a statement \(s\) and datasets \(R_0,R_1\) with
\(d(R_0,R_1)\le k\).
Let
\(
P_i:=\mathcal L(\mathcal A(q,R_i)),
\,
i\in\{0,1\}.
\)

\begin{theorem}[Local indistinguishability under DP]
\label{thm:testing-lb}
For any \(\varepsilon\)-DP mechanism \(\mathcal A\),
\[
d_{\mathrm{TV}}(P_0,P_1)
\le
\frac{e^{k\varepsilon}-1}{e^{k\varepsilon}+1}.
\]
Consequently, every possibly randomized binary test \(\varphi\) satisfies
\[
\mathbb E_{P_0}[\varphi(Y)]
+
\mathbb E_{P_1}[1-\varphi(Y)]
\ge
\frac{2}{e^{k\varepsilon}+1}.
\]
\end{theorem}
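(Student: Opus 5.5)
We reduce to the group-privacy inequality from Section~\ref{sec:framework}. Since \(d(R_0,R_1)\le k\), group privacy gives, for every measurable event \(E\) and with \(\lambda:=e^{k\varepsilon}\),
\[
P_0(E)\le \lambda P_1(E),\qquad P_1(E)\le \lambda P_0(E),
\]
and the same two inequalities hold for complements. The total variation distance is \(\sup_E\,(P_0(E)-P_1(E))\), so we fix an arbitrary event \(E\) and bound \(a:=P_0(E)\) minus \(b:=P_1(E)\).

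The key step is to combine the bound on \(E\) with the bound on \(E^c\). From \(P_0(E)\le\lambda P_1(E)\) we get \(a-b\le(\lambda-1)b\). From \(P_1(E^c)\le\lambda P_0(E^c)\) we get \(1-b\le\lambda(1-a)\), which rearranges to \(a-b\le(\lambda-1)(1-a)\). Writing \(t=a-b\) and adding \(t/(\lambda-1)\le b\) to \(t/(\lambda-1)\le 1-a\) gives
\[
\frac{2t}{\lambda-1}\le 1-a+b=1-t.
\]
Solving this yields \(t\le(\lambda-1)/(\lambda+1)\). Taking the supremum over \(E\) gives the claimed total variation bound. This pairing of the event with its complement is the only nontrivial step. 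Using either inequality alone gives only the weaker bound \(1-e^{-k\varepsilon}\), so the symmetric combination is essential for the sharp constant.

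For the testing consequence, let \(\varphi:\mathcal Y\to[0,1]\) be any randomized test, where randomness external to \(Y\) is averaged into \(\varphi\). Then
\[
\mathbb E_{P_0}[\varphi]+\mathbb E_{P_1}[1-\varphi]=1-\bigl(\mathbb E_{P_1}[\varphi]-\mathbb E_{P_0}[\varphi]\bigr).
\]
Since \(0\le\varphi\le1\), the standard variational characterization gives \(|\mathbb E_{P_1}\varphi-\mathbb E_{P_0}\varphi|\le d_{\mathrm{TV}}(P_0,P_1)\). To see this, write \(\varphi\) as an integral of indicators of its level sets and apply the event bound to each level set. Therefore the sum of the two error probabilities is at least
\[
1-\frac{\lambda-1}{\lambda+1}=\frac{2}{e^{k\varepsilon}+1}.
\]
This completes the plan.
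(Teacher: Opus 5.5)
Your proposal is correct and follows essentially the same route as the paper: both apply group privacy to an event and its complement ($P_0(E)\le e^{k\varepsilon}P_1(E)$ and $P_1(E^c)\le e^{k\varepsilon}P_0(E^c)$), and then use the same $1-d_{\mathrm{TV}}$ testing bound; your normalize-and-add step just replaces the paper's maximization of $\min\{(\lambda-1)b,\tfrac{\lambda-1}{\lambda}(1-b)\}$ over $b=P_1(E)$. One small fix: dividing by $\lambda-1$ breaks in the degenerate case $k\varepsilon=0$, and you can avoid this by adding $a-b\le(\lambda-1)b$ and $a-b\le(\lambda-1)(1-a)$ directly to get $2(a-b)\le(\lambda-1)\bigl(1-(a-b)\bigr)$.
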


This is the standard binary-testing consequence of group privacy \citep{kairouz2015composition}, applied here to statement support.
In particular, the occurrence of a statement in a DP output cannot by itself serve as a calibrated support certificate.

The same argument limits support estimation.
Writing
\(a_i:=f_s(R_i)\) and \(\Delta:=|a_1-a_0|\), every estimator
\(\hat f_s:\mathcal Y\to[0,1]\) satisfies
\[
\max_{i\in\{0,1\}}
\mathbb E_{Y\sim P_i}
\bigl[
|\hat f_s(Y)-a_i|
\bigr]
\ge
\frac{\Delta}{2(e^{k\varepsilon}+1)}.
\]
For the empirical-frequency support in
Equation~\ref{eq:support_classifier}, changing support by \(\Delta\) requires at least \(\lceil n\Delta\rceil\) record replacements, so these bounds are local and become weak when \(k\varepsilon\) is large.
They do not imply that DP hides arbitrarily large support differences; rather, a single DP output cannot provide arbitrarily reliable implicit evidence on nearby datasets.

\subsection{Sharp Single-Statement Certification Frontier}
\label{subsec:sharp-frontier}

We now ask how quickly any private certifier can increase its probability of releasing a statement as its support rises above the reporting threshold.

For a fixed statement \(s\), let
\(
N_s(R):=\sum_{r\in R}\sigma_s(r),
\,
t_\theta:=\lceil n\theta\rceil,
\)
so that \(t_\theta-1\) is the largest unsupported support count.
Define
\[
d_s(R):=\max\{N_s(R)-t_\theta+1,0\},
\]
the number of supporting-record steps from this boundary.

\begin{theorem}[Sharp single-statement emission frontier]
\label{thm:sharp-frontier}
Assume \(0<\beta<1/2\) and that both supporting and non-supporting records exist.
Let
\(
\pi_0:=\beta,
\,
\pi_{k+1}
:=
\min\left\{
e^\varepsilon\pi_k,\,
1-e^{-\varepsilon}(1-\pi_k)
\right\}.
\)
For any \(\varepsilon\)-DP mechanism satisfying
\(
f_s(R)<\theta
\Longrightarrow
\Pr[a_s=\mathrm{emit}\mid R]\le\beta,
\)
the emission probability satisfies
\[
\Pr[a_s=\mathrm{emit}\mid R]
\le
\pi_{d_s(R)}
\]
for every dataset \(R\).
Moreover, the bound is pointwise tight: an
\(\varepsilon\)-DP, \(\beta\)-honest count-based mechanism attains
\(\pi_{d_s(R)}\) at every support count.
\end{theorem}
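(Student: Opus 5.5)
The plan has two halves: an upper bound proved by induction along a chain of adjacent datasets, and a matching count-based construction.

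\textbf{Upper bound.} Fix $R$ with $d:=d_s(R)$. If $d=0$, then $N_s(R)\le t_\theta-1$, so $f_s(R)<\theta$, and honesty gives the bound $\pi_0=\beta$ directly. For $d\ge 1$, we build a chain $R^{(0)},R^{(1)},\dots,R^{(d)}=R$. Starting from $R$, we repeatedly replace a supporting record by a non-supporting one. Each step is a single replacement, so consecutive datasets are adjacent, and $N_s(R^{(j)})=t_\theta-1+j$. In particular $R^{(0)}$ is unsupported. The assumption that non-supporting records exist is used here: it guarantees such a replacement record is available in the data universe.

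We then induct on $j$ to show $p_j:=\Pr[a_s=\mathrm{emit}\mid R^{(j)}]\le\pi_j$. The base case $p_0\le\beta$ is the honesty condition. For the step, apply $\varepsilon$-DP to $R^{(j)}\sim R^{(j+1)}$ twice. Applied to the emit event it gives $p_{j+1}\le e^\varepsilon p_j$. Applied to the abstain event it gives $1-p_j\le e^{\varepsilon}(1-p_{j+1})$, i.e.\ $p_{j+1}\le 1-e^{-\varepsilon}(1-p_j)$. Both maps are nondecreasing in $p_j$, so taking the minimum and using $p_j\le\pi_j$ yields $p_{j+1}\le\pi_{j+1}$.

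\textbf{Tightness.} We define the count-based mechanism that emits with probability $g(N_s(R)):=\pi_{\max\{N-t_\theta+1,0\}}$. This is $\beta$ for every unsupported count and equals $\pi_{d_s(R)}$ by construction, so $\beta$-honesty and attainment are immediate. It remains to verify $\varepsilon$-DP. A single replacement changes $N$ by at most one, and the output is a Bernoulli variable, so it suffices to check, for consecutive counts,
\begin{equation*}
e^{-\varepsilon}\le \frac{g(N+1)}{g(N)}\le e^{\varepsilon}
\quad\text{and}\quad
e^{-\varepsilon}\le \frac{1-g(N+1)}{1-g(N)}\le e^{\varepsilon}.
\end{equation*}
On the flat region $g$ is constant, so there is nothing to check. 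On the increasing region we use two facts:
\begin{itemize}
\item $\pi_{k+1}\le e^\varepsilon\pi_k$ and $1-\pi_{k+1}\ge e^{-\varepsilon}(1-\pi_k)$ hold by the definition of the minimum;
\item $\pi_{k+1}\ge\pi_k$ and $1-\pi_{k+1}\le 1-\pi_k$ follow from monotonicity of $\pi_k$. This holds because both arguments of the minimum are at least $\pi_k$ whenever $\pi_k\le 1$, a property that propagates by induction.
\end{itemize}
These give all four ratio constraints. We also check that $\pi_k\in[\beta,1]$, so $g$ is a valid probability. Counts of $N$ below $t_\theta-1$ and the full range $0\le N\le n$ are covered automatically. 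The hypothesis that supporting records exist ensures both supported and unsupported datasets are realizable, so the statement is non-vacuous.

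\textbf{Main obstacle.} The main subtlety is the DP verification of the construction. Adjacency may mean replacing a supporting record by another supporting one, which leaves $N$ unchanged, or moving $N$ by exactly one in either direction. The two-sided ratio conditions must therefore be checked for both $N\to N+1$ and $N+1\to N$. The $\beta<1/2$ assumption is not needed for this step; the recursion itself keeps $\pi_k\le 1$. I would confirm that $\beta<1/2$ is used only to make the frontier nontrivial, namely that the second branch eventually binds.
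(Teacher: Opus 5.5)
Your proposal is correct and follows essentially the same route as the paper. The upper bound goes by induction on \(d_s(R)\): step to an adjacent dataset with one supporting record swapped for a non-supporting one, then apply \(\varepsilon\)-DP to both the emit and abstain events. Tightness uses the count-based mechanism that emits with probability \(\pi_{d_s(R)}\), verified via \(\pi_{k+1}\le e^\varepsilon\pi_k\), \(1-\pi_k\le e^\varepsilon(1-\pi_{k+1})\), and \(\pi_{k+1}\ge\pi_k\).

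On your closing aside, \(\beta<1/2\) is not what makes the second branch of the minimum bind. That branch binds as soon as \(\pi_k>1/(e^\varepsilon+1)\), which eventually happens for any \(\beta\in(0,1)\). The paper's proof never uses \(\beta<1/2\). It matters only for the half-emission consequence, where it makes \(\log\frac{1}{2\beta}>0\).
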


The theorem gives an exact upper envelope on how quickly release can become more likely as support increases.
Since \(\pi_k\le\beta e^{k\varepsilon}\), any certifier with emission probability at least \(1/2\) must satisfy
\[
d_s(R)
\ge
\left\lceil
\frac{1}{\varepsilon}
\log\frac{1}{2\beta}
\right\rceil.
\]
The optimal count-based mechanism attains the binary release frontier but provides only a threshold-level certificate.
Section~\ref{sec:method} shows that \textsc{PrivCert}-PF reaches this fundamental margin essentially optimally while additionally providing a quantitative lower certificate on statement support.
\section{A Canonical \textsc{PrivCert} Instantiation: Proposal-and-Filter}
\label{sec:method}

We instantiate \textsc{PrivCert} with \emph{\textsc{PrivCert}-PF (Proposal-and-Filter)}, which separates data-independent candidate discovery from private support certification and emits only statements whose lower support certificates clear the reporting threshold.
The construction is deliberately simple, making its privacy, evidential validity, and coverage behavior explicit.
Section~\ref{sec:experiments} evaluates alternative \textsc{PrivCert} instantiations with different sensitivity and selection structures.
Proofs and pseudocode are given in Appendix~\ref{app:proofs-method}.

\subsection{Proposal-and-Filter}
\label{sec:pf-algorithm}

Given a reporting task \(q\), let \(\Pi_q\) be a proposal distribution independent of the private dataset \(R\), and sample
\(
s_1,\dots,s_m \overset{\mathrm{i.i.d.}}{\sim}\Pi_q.
\)
For each candidate \(s_j\), \textsc{PrivCert}-PF computes
\[
\widehat f_{s_j}(R)=f_{s_j}(R)+Z_j,
\qquad
L_{s_j}=\widehat f_{s_j}(R)-\tau,
\]
and emits \(s_j\) iff \(L_{s_j}\ge\theta\).
Our primary instantiation uses independent Laplace noise with equal allocation of the total privacy budget; Section~\ref{sec:pf-guarantees} analyzes its calibration, and Appendix~\ref{app:support-model} gives the experimental accounting.

Candidate generation is therefore a discovery step rather than a source of evidential validity: proposals need not themselves be correct, but every released statement must pass the private support test.
Conversely, a supported statement that is never proposed cannot be certified.
Any final natural-language rendering is post-processing, but must preserve the semantics of the certified statement; otherwise the certificate applies only to the original candidate.

\subsection{Privacy, Honesty, and Near-Optimality}
\label{sec:pf-guarantees}

If support query \(j\) is \(\varepsilon_j\)-DP, then data-independent proposal, sequential composition, and post-processing make \textsc{PrivCert}-PF \(\sum_j\varepsilon_j\)-DP.

\begin{theorem}[Certificate validity and per-candidate one-sided honesty]
\label{thm:honesty}
Fix an evaluated statement \(s\).
Suppose
\(
\widehat f_s(R)=f_s(R)+Z,
\,
\Pr[Z\ge\tau]\le\beta,
\,
L_s=\widehat f_s(R)-\tau.
\)
Then
\[
\Pr[L_s>f_s(R)]\le\beta.
\]
Consequently, if \(s\) is emitted only when \(L_s\ge\theta\), then
\(
f_s(R)<\theta
\Longrightarrow
\Pr[\mathrm{Emit}_s]\le\beta.
\)
\end{theorem}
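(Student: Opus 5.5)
The argument is a direct event-inclusion computation, followed by a monotonicity step for the emission consequence. Throughout, $R$ and the statement $s$ are fixed, so $f_s(R)$ is a deterministic number and the only randomness is in the noise $Z$.

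\textbf{Certificate validity.} Substituting the definitions gives $L_s = f_s(R) + Z - \tau$. Hence
\[
\{L_s > f_s(R)\} = \{Z - \tau > 0\} = \{Z > \tau\} \subseteq \{Z \ge \tau\},
\]
so that
\[
\Pr[L_s > f_s(R)] \le \Pr[Z \ge \tau] \le \beta
\]
by the hypothesis on $Z$. This matches the validity definition of Section~\ref{sec:framework} with $c_s = L_s$.

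\textbf{Honesty.} Suppose $f_s(R) < \theta$. Emission requires $L_s \ge \theta$, and $\theta > f_s(R)$, so
\[
\{\mathrm{Emit}_s\} \subseteq \{L_s \ge \theta\} \subseteq \{L_s > f_s(R)\}.
\]
Monotonicity of probability and the validity bound then give $\Pr[\mathrm{Emit}_s] \le \beta$. This is exactly the implication in equation~\ref{eq:framework-honesty}, specialized to \textsc{PrivCert}-PF.

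The only point needing care, which I expect to be the sole real obstacle, is handling strict versus non-strict inequalities. Two places matter:
\begin{itemize}
\item The validity event $L_s > f_s(R)$ is strict, while the hypothesis is $\Pr[Z \ge \tau]$. This is harmless because $\{Z > \tau\} \subseteq \{Z \ge \tau\}$.
\item Emission uses the non-strict test $L_s \ge \theta$. It reduces to the strict validity event only because unsupportedness is the strict condition $f_s(R) < \theta$.
\end{itemize}

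I would also remark on two issues of scope. First, if $s$ is a randomly proposed candidate, the argument applies conditionally on the proposal. This is valid because $\Pi_q$ is independent of $R$ and of $Z$, so conditioning on $s$ leaves the distribution of $Z$ unchanged. Second, the statement is per candidate and does not control the family-wise error over multiple candidates.
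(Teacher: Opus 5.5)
Your proof is correct and follows essentially the same route as the paper's: the same event inclusion $\{L_s>f_s(R)\}=\{Z>\tau\}\subseteq\{Z\ge\tau\}$ for validity, and the strict inequality $f_s(R)<\theta$ for the emission bound. The only cosmetic difference is that you pass through the validity event $\{L_s>f_s(R)\}$, whereas the paper goes directly to $Z\ge\tau+\theta-f_s(R)>\tau$.
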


The guarantee remains valid after candidate selection when the certification noise is fresh and independent of the selection event; any private selection must additionally be included in the privacy accounting.
For \(m\) evaluated candidates, a union bound gives a report-level failure probability at most \(m\beta\); Appendix~\ref{app:family-wise} evaluates calibration to a fixed whole-report target.

\medskip
\noindent
\textbf{Near-optimality.}
\label{subsec:pf-optimality}
The sharp frontier in Theorem~\ref{thm:sharp-frontier} also characterizes how close \textsc{PrivCert}-PF is to the best possible single-statement certifier.
For a statement with privacy budget \(\varepsilon_s\), consider
\[
Z\sim\mathrm{Lap}\!\left(\frac{1}{n\varepsilon_s}\right),
\qquad
\tau=
\frac{1}{n\varepsilon_s}\log\frac{1}{2\beta},
\]
the tight symmetric Laplace calibration for the two one-sided tail conditions.

\begin{corollary}[Near-optimal sharply calibrated half-emission threshold]
\label{cor:pf-near-optimal}
Let
\(
A
:=
\frac{1}{\varepsilon_s}
\log\frac{1}{2\beta},
\,
k^\star:=\lceil A\rceil,
\)
and assume
\(
\left\lceil n\theta+A\right\rceil\le n,
\)
equivalently, \(\theta+\tau\le1\) for the sharp Laplace margin \(\tau=A/n\).

For any \(\varepsilon_s\)-DP, \(\beta\)-honest single-statement certifier that reaches emission probability at least \(1/2\), the first such distance from the largest unsupported dataset is at least \(k^\star\).
Sharply calibrated Laplace \textsc{PrivCert}-PF first reaches emission probability at least \(1/2\) at distance either \(k^\star\) or \(k^\star+1\).
Hence its half-emission threshold is within one supporting record of optimal.
\end{corollary}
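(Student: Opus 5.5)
The plan is to prove the two halves separately: a lower bound valid for every certifier, taken from Theorem~\ref{thm:sharp-frontier}, and an exact computation of the half-emission point of the sharply calibrated Laplace filter. The gap between them then reduces to an inequality about ceilings. Throughout, I measure distance by $d_s(R)=N_s(R)-t_\theta+1$. The largest unsupported dataset has count $t_\theta-1$ and distance $0$, and each additional supporting record adds one step.

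\textbf{Lower bound.} Fix any $\varepsilon_s$-DP, $\beta$-honest certifier. By Theorem~\ref{thm:sharp-frontier},
\[
\Pr[\mathrm{emit}\mid R]\le\pi_{d_s(R)}.
\]
The recursion gives $\pi_k\le e^{\varepsilon_s}\pi_{k-1}$, so by induction $\pi_k\le\beta e^{k\varepsilon_s}$. If $d_s(R)<A$, then $\beta e^{d_s(R)\varepsilon_s}<\beta e^{A\varepsilon_s}=1/2$, so the emission probability stays strictly below $1/2$. Hence any distance at which the emission probability reaches $1/2$ satisfies $d_s(R)\ge A$. Since distances are integers, this gives $d_s(R)\ge\lceil A\rceil=k^\star$; in particular the first such distance is at least $k^\star$.

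\textbf{Calibration and the Laplace filter.} First I would check that the stated calibration is admissible. For $Z\sim\mathrm{Lap}(1/(n\varepsilon_s))$ we have
\[
\Pr[Z\ge\tau]=\tfrac12 e^{-n\varepsilon_s\tau}=\beta,
\]
so Theorem~\ref{thm:honesty} applies and the mechanism is $\beta$-honest. The noise scale matches sensitivity $1/n$, so the mechanism is also $\varepsilon_s$-DP.

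The emission event $L_s\ge\theta$ is equivalent to
\[
Z\ge \theta+\tau-N_s(R)/n.
\]
The Laplace distribution is continuous with median $0$, so $\Pr[Z\ge c]\ge1/2$ if and only if $c\le0$. Therefore the emission probability is at least $1/2$ exactly when $N_s(R)\ge n\theta+A$, using $n\tau=A$. The smallest such integer count is $\lceil n\theta+A\rceil$, and the hypothesis $\lceil n\theta+A\rceil\le n$ ensures a dataset with this count exists. The corresponding distance is
\[
d^{\mathrm{PF}}=\lceil n\theta+A\rceil-\lceil n\theta\rceil+1.
\]

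\textbf{Ceiling step.} For reals $x,y$,
\[
\lceil x\rceil+\lceil y\rceil-1\le\lceil x+y\rceil\le\lceil x\rceil+\lceil y\rceil.
\]
Applying this with $x=n\theta$ and $y=A$ gives $d^{\mathrm{PF}}\in\{k^\star,k^\star+1\}$. Combined with the lower bound, the half-emission threshold of PF is within one supporting record of optimal.

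The main obstacle is bookkeeping at the boundaries rather than any deep estimate. Three points need care. First, the non-strict comparisons ($\ge$ in the emission rule and the median argument for $c=0$) must be handled consistently. Second, when $A$ is an integer, one must check that equality $\beta e^{k^\star\varepsilon_s}=1/2$ still permits emission at exactly $k^\star$. Third, every intermediate count must be realizable by some dataset; this is where I use the existence of both supporting and non-supporting records together with the assumption $\lceil n\theta+A\rceil\le n$.
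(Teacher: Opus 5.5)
Your proof is correct and follows the paper's argument: the lower bound comes from $\pi_k\le\beta e^{k\varepsilon_s}$ in Theorem~\ref{thm:sharp-frontier}, and the Laplace half-emission point comes from the median condition $f_s(R)\ge\theta+\tau$. Your distance $\lceil n\theta+A\rceil-\lceil n\theta\rceil+1$ is exactly the paper's $\lceil A+1-\rho\rceil$ with $\rho=\lceil n\theta\rceil-n\theta$, so the ceiling-additivity step is the same computation written differently.
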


Thus, under sharp calibration, little room remains to improve the half-emission threshold once the per-statement privacy budget is fixed.
The optimal staircase mechanism of Theorem~\ref{thm:sharp-frontier} achieves the exact binary release frontier but exposes only a threshold-level certificate; \textsc{PrivCert}-PF additionally provides a quantitative lower certificate on statement support.

\subsection{Discovery and Coverage}
\label{sec:pf-discovery}

Certification alone does not determine whether useful supported statements are discovered.
For \(a\in\mathbb{R}\), define
\[
p_a(R):=\Pr_{s\sim\Pi_q}[f_s(R)\ge a],
\]
which measures how much proposal mass lies above a given support level.

\begin{theorem}[EmitRate trade-off]
\label{thm:precision-coverage}
Let
\(s_1,\dots,s_m\overset{\mathrm{i.i.d.}}{\sim}\Pi_q\), 
and suppose
\(\hat f_{s_j}(R)=f_{s_j}(R)+Z_j\),
where \(Z_j\) is independent of \(s_j\) and satisfies
\(\Pr[Z_j\ge\tau]\le\beta\) and
\(\Pr[Z_j\le-\tau]\le\beta\).
Let
\[
O:=\{j:\widehat f_{s_j}(R)-\tau\ge\theta\},
\qquad
\mathrm{EmitRate}:=
\mathbb E\!\left[\frac{|O|}{m}\right].
\]
Then
\[
(1-\beta)p_{\theta+2\tau}(R)
\le
\mathrm{EmitRate}
\le
p_\theta(R)+\beta\bigl(1-p_\theta(R)\bigr)
\le
p_\theta(R)+\beta.
\]
\end{theorem}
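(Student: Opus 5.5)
By linearity of expectation and the i.i.d.\ structure, \(\mathrm{EmitRate}=\frac1m\sum_j\Pr[j\in O]=\Pr[\widehat f_{s}(R)-\tau\ge\theta]\) for a single draw \(s\sim\Pi_q\) with independent noise \(Z\). So the statement reduces to bounding one probability, \(\Pr[f_s(R)+Z\ge\theta+\tau]\). We condition on \(s\). Because \(Z\) is independent of \(s\), conditioning does not alter the tail bounds \(\Pr[Z\ge\tau]\le\beta\) and \(\Pr[Z\le-\tau]\le\beta\). We then split the proposal mass according to where \(f_s(R)\) falls relative to \(\theta\) and \(\theta+2\tau\).

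\textbf{Upper bound.} Partition on the event \(\{f_s(R)\ge\theta\}\), which has probability \(p_\theta(R)\). On this event we bound the conditional emission probability trivially by \(1\). On the complement \(f_s(R)<\theta\), Theorem~\ref{thm:honesty} applies conditionally on \(s\), because emission requires \(Z\ge\theta+\tau-f_s(R)>\tau\). This gives conditional emission probability at most \(\beta\). By the law of total probability the total is at most \(p_\theta(R)+\beta(1-p_\theta(R))\), and this is at most \(p_\theta(R)+\beta\) since \(1-p_\theta\le1\).

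\textbf{Lower bound.} Restrict to the event \(\{f_s(R)\ge\theta+2\tau\}\), which has probability \(p_{\theta+2\tau}(R)\). On it, emission fails only if \(f_s(R)+Z<\theta+\tau\). That forces \(Z<\theta+\tau-f_s(R)\le-\tau\), an event of conditional probability at most \(\Pr[Z\le-\tau]\le\beta\) (the strict versus non-strict inequality only helps). So the conditional emission probability is at least \(1-\beta\). Dropping the nonnegative contribution from the complementary event yields \((1-\beta)p_{\theta+2\tau}(R)\).

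The only delicate point is justifying the conditioning. We need \(Z_j\) to be independent of \(s_j\), so the noise tail bounds hold conditionally on each fixed statement, and \(f_s(R)\) must be a measurable function of \(s\) for fixed \(R\). Once this is stated, everything else is routine total-probability bookkeeping. No privacy argument or earlier frontier result is needed beyond the one-sided tail reasoning already used in Theorem~\ref{thm:honesty}.
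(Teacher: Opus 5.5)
Your proof is correct and follows the paper's argument essentially step for step. You reduce EmitRate by linearity to a single conditional emission probability, bound it by \(\beta\) on \(\{f_s(R)<\theta\}\) via the one-sided tail as in Theorem~\ref{thm:honesty} and by \(1\) otherwise, and lower-bound it by \(1-\beta\) on \(\{f_s(R)\ge\theta+2\tau\}\) using \(\Pr[Z\le-\tau]\le\beta\). Collapsing to a single draw \((s,Z)\) tacitly treats the \(Z_j\) as identically distributed, as the paper also does; this is harmless, since each term \(\Pr[j\in O]\) satisfies the same bounds and averaging over \(j\) preserves them.
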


The theorem separates discovery from verification.
The verifier controls unsupported emission, while coverage depends on proposal mass in regions that can be certified under the available privacy budget.
In particular, candidates with \(f_s(R)\ge\theta+2\tau\) are emitted with probability at least \(1-\beta\).

\textbf{Tag-template proposer.}
Our primary proposer is designed around three principles: \emph{data independence}, so discovery incurs no privacy cost; \emph{matcher alignment}, so proposed statements can be evaluated reliably under the fixed support semantics; and \emph{threshold coverage}, so proposal mass is placed on statements likely to have sufficient support.
We instantiate these principles using a public schema over domain-relevant dimensions such as topic, aspect, and sentiment.
A sampled tuple is rendered through a fixed template, for example,
\[
\text{``Reviews about \texttt{\{topic\}} commonly mention
\texttt{\{aspect\}} in \texttt{\{sentiment\}} contexts.''}
\]
The schema, values, and templates are fixed before observing \(R\).
Controlled surface forms make matcher behavior easier to validate on public or otherwise non-sensitive examples, while schema granularity trades specificity against the probability of clearing the reporting threshold.
A public-LLM proposer provides another data-independent discovery mechanism; we compare the two in Appendix~\ref{app:proposer-comparison}.

\subsection{The Cost of Fine-Grained Certification}
\label{subsec:budget-coverage}

The single-statement analysis assumes that a privacy budget has already been assigned to one statement.
With \(m\) statements sharing a total budget \(\varepsilon_{\mathrm{tot}}\), one-sided honesty alone is insufficient to express useful coverage, since always abstaining is trivially honest.
We therefore call a mechanism \((w,\beta)\)-fine-grained if, for every statement \(s_j\),
\[
f_{s_j}(R)<\theta
\Longrightarrow
\Pr[j\in O]\le\beta,
\qquad
f_{s_j}(R)\ge\theta+w
\Longrightarrow
\Pr[j\in O]\ge1-\beta.
\]

\begin{theorem}[Cost of fine-grained certification]
\label{thm:fine-grained-cost}
Consider a worst-case regime in which the record universe can realize every binary support pattern over \(m\) statements, and assume \(\lceil n\theta\rceil+\lceil nw\rceil\le n\).
For fixed \(0<\beta<1/8\) and \(\varepsilon_{\mathrm{tot}}=o(m)\), any pure \(\varepsilon_{\mathrm{tot}}\)-DP mechanism satisfying \((w,\beta)\)-fine-grained certification must have
\[
w
=
\Omega\!\left(
\frac{m}{n\varepsilon_{\mathrm{tot}}}
\right).
\]
\end{theorem}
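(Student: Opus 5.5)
We use a packing (reconstruction) argument over the hypercube of support patterns, combined with group privacy. Fix $t:=\lceil n\theta\rceil-1$, the largest unsupported count, and $h:=\lceil n\theta\rceil+\lceil nw\rceil-1$, so that a count of $t+1+\lceil nw\rceil-1\ge n(\theta+w)$ forces support at least $\theta+w$. This is feasible since $h\le n$. For each $x\in\{0,1\}^m$ we build a dataset $R_x$ in which statement $j$ has count $t$ if $x_j=0$ and count at least $n(\theta+w)$ if $x_j=1$.

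The worst-case assumption that the record universe realizes every binary support pattern lets us do this by layering. We take a base block of $t$ records supporting every statement and $n-h$ records supporting none. The remaining $\ell:=h-t=\lceil nw\rceil$ records then all realize the pattern $x$. Consequently $R_x$ and $R_{x'}$ differ in at most $\ell$ records, whatever the Hamming distance between $x$ and $x'$. Here I assume replacement adjacency, so group privacy gives a factor $e^{\ell\varepsilon_{\mathrm{tot}}}$ between the output distributions of any two $R_x$. If counting precision needs care, I will take $\ell=\lceil nw\rceil+1$, which changes nothing asymptotically.

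Fine-grained certification now gives a decoder. Let $\hat x_j:=\mathbf 1[j\in O]$. Then $\Pr[\hat x_j\ne x_j\mid R_x]\le\beta$ for every $j$, so $\mathbb E\,d_H(\hat x,x)\le\beta m$. By Markov's inequality, $\Pr[d_H(\hat x,x)\le 4\beta m]\ge 3/4$. Since $\beta<1/8$, the radius $4\beta m$ is below $m/2$.

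We choose a Gilbert--Varshamov code $C\subset\{0,1\}^m$ with minimum distance greater than $8\beta m$ and $\log|C|\ge c_\beta m$ with $c_\beta>0$. The balls of radius $4\beta m$ around distinct codewords are then disjoint, and this is exactly where $\beta<1/8$ is used. Fix one $x_0\in C$. For each $x\in C$, group privacy gives $\Pr_{R_{x_0}}[\hat x\in B(x,4\beta m)]\ge e^{-\ell\varepsilon_{\mathrm{tot}}}\cdot 3/4$. Summing over the disjoint balls gives $1\ge|C|\,e^{-\ell\varepsilon_{\mathrm{tot}}}\cdot 3/4$. Hence $\ell\varepsilon_{\mathrm{tot}}\ge c_\beta m-\log(4/3)$, that is, $\lceil nw\rceil\ge \Omega(m/\varepsilon_{\mathrm{tot}})$.

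To finish, we convert $\lceil nw\rceil\ge \Omega(m/\varepsilon_{\mathrm{tot}})$ into $w=\Omega(m/(n\varepsilon_{\mathrm{tot}}))$. The hypothesis $\varepsilon_{\mathrm{tot}}=o(m)$ is what makes this work: it gives $m/\varepsilon_{\mathrm{tot}}\to\infty$, so the rounding in $\lceil nw\rceil\le nw+1$ (and the possible $+1$ in $\ell$) is absorbed, yielding $nw\ge \Omega(m/\varepsilon_{\mathrm{tot}})$.

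The main obstacle is the construction step. I must check that the layered datasets realize exactly count $t$ (strictly below $n\theta$) for the zero coordinates and at least $n(\theta+w)$ for the one coordinates. I must also check that all $R_x$ have size exactly $n$ and lie within replacement distance $\ell$ of each other, since it is this uniform distance, rather than the Hamming distance, that drives the bound. The packing and Gilbert--Varshamov steps are standard once this is in place.
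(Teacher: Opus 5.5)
Your skeleton is the paper's own argument. You use the layered hard instance, in which every $R_x$ shares a block of $\ell$ records realizing pattern $x$, so all datasets lie within replacement distance $\ell$ of each other. You then decode by Markov onto disjoint Hamming balls around a packing, and sum group-privacy lower bounds from a single reference dataset.

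\textbf{The gap is in the packing parameters.} You tie the decoding radius to $\beta$: radius $4\beta m$, success probability $3/4$. You then need a code with minimum distance $>8\beta m$ and $\log|C|\ge c_\beta m$, $c_\beta>0$. Such a code exists only when $8\beta<1/2$. The Gilbert--Varshamov count $2^m/\sum_{i\le 8\beta m}\binom{m}{i}$ has positive rate only for relative distance below $1/2$. Worse, by the Plotkin bound, any binary code of length $m$ with minimum distance $d>m/2$ has at most $2d/(2d-m)\le m+1$ codewords. So for $1/16\le\beta<1/8$ your packing yields only $\ell\varepsilon_{\mathrm{tot}}\gtrsim\log m$, and $\Omega(m/\varepsilon_{\mathrm{tot}})$ is not established on that part of the stated range. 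Your remark that this is ``exactly where $\beta<1/8$ is used'' misidentifies the constraint. Having radius below $m/2$ is not what matters; disjoint balls around an exponentially large code require radius below $m/4$.

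\textbf{The fix, which is what the paper does, is to decouple the radius from $\beta$.} Fix the radius at $m/8$. Markov gives $\Pr[d_H(\hat x,x)\ge m/8\mid R_x]\le 8\beta$, so the success probability is $1-8\beta$. That quantity is positive precisely because $\beta<1/8$; this is where the hypothesis genuinely enters. Pair this with a Gilbert code of minimum distance $>m/4$ and size at least $2^{\kappa m}$, $\kappa=1-H_2(1/4)$. The summation then gives $\ell\varepsilon_{\mathrm{tot}}\ge\kappa m\log 2+\log(1-8\beta)$. Any radius $\rho m$ with $\beta<\rho<1/4$ would also work.

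\textbf{A smaller point on the construction.} Your first choice $\ell=\lceil nw\rceil$ does not suffice. When $n\theta$ and $nw$ are integers, $t+\lceil nw\rceil=n(\theta+w)-1<n(\theta+w)$. So your fallback $\ell=\lceil nw\rceil+1$ is required, not optional. It is exactly what the hypothesis $\lceil n\theta\rceil+\lceil nw\rceil\le n$ permits, and it matches the paper's $d\le\lceil nw\rceil+1\le nw+2$.
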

The finite-sample bound and packing proof are given in Appendix~B.
The richness assumption is a worst-case condition rather than a claim about every semantic support model.

Under equal allocation, sharply calibrated Laplace \textsc{PrivCert}-PF satisfies fine-grained certification with
\(
w_{\mathrm{PF}}
=
\frac{2m}{n\varepsilon_{\mathrm{tot}}}
\log\frac{1}{2\beta},
\)
matching the lower bound's dependence on \(m/(n\varepsilon_{\mathrm{tot}})\) in this regime for fixed \(\beta\).

Thus, in this regime, the linear scaling is not merely an artifact of
Laplace noise or equal budget allocation: in the worst-case overlapping-support regime, it is the cost of resolving many statement supports individually.
Other \textsc{PrivCert} instantiations can improve this trade-off when additional structure, joint sensitivity, or selective release can be exploited.

\section{Empirical Evaluation}
\label{sec:experiments}

We evaluate whether the \textsc{PrivCert} framework developed above is reflected in practice. 
Our experiments ask whether
(i) free-text DP baselines emit claims with low support,
(ii) \textsc{PrivCert}-PF controls unsupported emission,
(iii) its EmitRate follows Theorem~\ref{thm:precision-coverage}, and
(iv) alternative \textsc{PrivCert} instantiations realize the same reporting contract under different trade-offs.

\textbf{Metrics.}
For \textsc{PrivCert}-PF, we report UnsupportedEmit, the empirical
per-candidate failure rate controlled by Theorem~\ref{thm:honesty}, and EmitRate, the coverage quantity characterized by Theorem~\ref{thm:precision-coverage}.
We additionally report FalseEmission and MeanSupport as descriptive diagnostics; full definitions are given in Appendix~\ref{app:experimental-diagnostics}.

Additional analyses of proposer design, semantic-matcher robustness, and family-wise guarantees are reported in Appendices~\ref{app:proposer-comparison},
\ref{app:matcher-robustness}, and \ref{app:family-wise}, respectively.

\subsection{Synthetic Validation}
\label{subsec:synthetic-validation}

We first isolate the \textsc{PrivCert}-PF evidence mechanism from proposer and classifier effects using a finite statement universe with oracle support values.
Candidates are sampled uniformly, and we sweep
\(\varepsilon_{\mathrm{tot}}\in\{1,4,10,20\}\),
\(\theta\in\{0.05,0.1,0.2\}\),
\(m\in\{50,200,1000\}\), and
\(\beta\in\{0.01,0.05,0.1\}\) over five seeds.
The results are consistent with the one-sided calibration and Theorem~\ref{thm:precision-coverage} across the full grid.
UnsupportedEmit remains low under the prescribed one-sided calibration (maximum \(0.038\), mean \(0.002\)), and empirical EmitRate lies within the
interval predicted by Theorem~\ref{thm:precision-coverage} in all tested cells up to sampling slack.
The sweep also illustrates the effect of \(m\) under a fixed total privacy budget: increasing \(m\) evaluates more candidates in absolute number but reduces the privacy budget available to each support query, increasing the certification margin.
Because EmitRate is a fraction, increasing \(m\) does not itself increase the proposal-induced support mass \(p_a(R)\).
Full synthetic settings and diagnostics are reported in Appendix~\ref{app:metrics-synthetic}.

\textbf{Takeaway.}
Across the synthetic sweep, explicit certification reliably suppresses unsupported statements, while its coverage changes as predicted with the available privacy budget.

\subsection{Realistic Corpus Experiments}
\label{subsec:realistic-results}

We next evaluate whether \textsc{PrivCert}-PF controls unsupported emission in realistic text-reporting settings, and whether the evidence gap also appears in free-text DP outputs.
We use TAB (legal text)~\citep{pilan2022tab}, WildChat (conversations)~\citep{zhao2024wildchat}, and Yelp (reviews)~\citep{zhang2015character}.

\textbf{Support and proposer.}
We use a frozen public NLI cross-encoder to define the record-level support predicate and the tag-template proposer from Section~\ref{sec:pf-discovery}.
The matcher is fixed independently of \(R\), giving sensitivity at most \(1/|R|\).
Full support-model and calibration details are given in Appendix~\ref{app:support-model}.

\textbf{Baselines.}
We compare against no-data unfiltered generation, an aggregate-then-render document-level DP baseline, and token-level DP decoding with
\textsc{InvisibleInk}~\citep{vinod2025invisibleink}, all at the common target \((\varepsilon_{\mathrm{tot}},\delta)=(10,10^{-5})\).
Details are given in Appendix~\ref{app:support-model}.

\begin{table}[t]
\centering
\footnotesize
\caption{\textbf{FalseEmission on realistic datasets under the declared support semantics.}
For free-text methods this is a descriptive diagnostic, not a theorem-backed error rate. Lower is better.}
\label{tab:false-emission-main}
\begin{tabular}{lrrrrr}
\toprule
Dataset
& Unfiltered
& Doc-DP
& \shortstack[c]{\textsc{InvisibleInk}\\(TinyLlama)}
& \shortstack[c]{\textsc{InvisibleInk}\\(Qwen)}
& \textsc{PrivCert}-PF\\
\midrule
TAB      & 1.000 & 0.894 & 0.958 & 0.994 & \textbf{0.010} \\
WildChat & 0.947 & 0.729 & 0.647 & 0.619 & \textbf{0.014} \\
Yelp     & 0.943 & 0.803 & 0.877 & 0.856 & \textbf{0.000} \\
\bottomrule
\end{tabular}
\end{table}

\begin{table}[t]
\centering
\small
\caption{\textbf{Privacy-budget sweep on WildChat.}
Larger budgets improve EmitRate while preserving one-sided honesty.
Intervals use the tightened upper bound
\(p_\theta+\beta(1-p_\theta)\).}
\label{tab:privacy-budget-sweep}
\begin{tabular}{rrrrrlc}
\toprule
\(\varepsilon_{\mathrm{tot}}\) &
EmitRate \(\uparrow\) &
UnsupEmit \(\downarrow\) &
FalseEmit \(\downarrow\) &
MeanSupport \(\uparrow\) &
Interval & In? \\
\midrule
1  & 0.051 & 0.0217 & 0.295 & 0.275 & [0.000, 0.335] & \(\checkmark\) \\
4  & 0.180 & 0.0085 & 0.033 & 0.327 & [0.135, 0.335] & \(\checkmark\) \\
10 & 0.220 & 0.0046 & 0.014 & 0.288 & [0.162, 0.335] & \(\checkmark\) \\
20 & 0.274 & 0.0014 & 0.004 & 0.250 & [0.189, 0.335] & \(\checkmark\) \\
\bottomrule
\end{tabular}
\end{table}

Table~\ref{tab:false-emission-main} summarizes the main results.

\textbf{Free-text DP outputs often contain low-support claims under the declared support semantics.}
FalseEmission remains high for both document-level and token-level DP baselines.
For \textsc{InvisibleInk}, it ranges from 0.619 to 0.994 across datasets and model backbones.
Thus, privacy-preserving generation alone does not make statement occurrence an explicit support certificate.
Because the methods differ in output structure and candidate-generation procedure, Table~\ref{tab:false-emission-main} should be read as a diagnostic of the evidence gap under a common support measure, rather than as a causal estimate of the effect of certification or a general ranking of semantic quality.

\textbf{\textsc{PrivCert}-PF realizes the \textsc{PrivCert} contract.}
Across the three datasets, UnsupportedEmit lies in \([0.000,0.008]\), below the target \(\beta=0.05\).
This is the per-candidate quantity controlled by Theorem~\ref{thm:honesty}.
The same behavior holds across proposer ablations (Appendix~\ref{app:proposer-comparison}).
We further evaluate \textsc{PrivCert}-PF with a frozen Qwen3-8B judge as an alternative support model in Appendix~\ref{app:matcher-robustness}.

\textbf{EmitRate is consistent with the predicted trade-off.}
The observed aggregate EmitRates and all 15 per-seed realizations fall within the interval predicted for the expectation by Theorem~\ref{thm:precision-coverage}.
For the aggregate results, TAB: \(0.404\in[0.356,0.532]\), WildChat: \(0.220\in[0.162,0.335]\), and Yelp: \(0.749\in[0.620,0.809]\).
These finite-run checks are empirical consistency tests rather than additional per-run guarantees.

\textbf{Privacy-budget sweep.}
We evaluate the coverage--noise trade-off on WildChat while keeping the tag-template proposer, support model, threshold, and per-candidate honesty target fixed.
We vary \(\varepsilon_{\mathrm{tot}}\in\{1,4,10,20\}\) with \(\theta=0.05\) and \(\beta=0.05\).
As \(\varepsilon_{\mathrm{tot}}\) increases, support estimates become less noisy and EmitRate rises from \(0.051\) to \(0.274\).
UnsupportedEmit remains below \(\beta=0.05\) throughout.
The results are consistent with the inverse-budget dependence of the certification margin characterized in Section~\ref{subsec:budget-coverage}.

\textbf{Whole-report control.}
The guarantee above applies to each candidate separately: an unsupported candidate is emitted with probability at most \(\beta\).
When many candidates are evaluated, these risks can accumulate, so this does not bound the probability that the final report contains any unsupported statement.
To obtain such a report-level guarantee, we allocate a total failure budget
\(\beta_{\mathrm{tot}}\) across candidates so that
\(
\Pr[\exists s\in O:\ f_s(R)<\theta]
\le
\beta_{\mathrm{tot}}.
\)
With \(\beta_{\mathrm{tot}}=0.05\), EmitRate decreases from \(0.404\) to \(0.357\) on TAB, \(0.220\) to \(0.169\) on WildChat, and \(0.749\) to \(0.565\) on Yelp (Appendix~\ref{app:family-wise}).

\textbf{Takeaway.}
On realistic corpora, \textsc{PrivCert}-PF maintains low unsupported emission and the predicted coverage behavior, while free-text DP outputs frequently contain low-support claims under the same declared evidence semantics.

\subsection{Alternative \textsc{PrivCert} Instantiations}
\label{subsec:alternative-evidence}

\begin{table}[t]
\centering
\footnotesize
\caption{\textbf{Matched comparisons of \textsc{PrivCert} instantiations.}
Entries show EmitRate. All UnsupportedEmit values remain below \(\beta=0.05\).
Comparisons are matched within each candidate space.}
\label{tab:alternative-mechanisms}
\begin{tabular}{lccccc}
\toprule
& \multicolumn{2}{c}{Sampled \(m=200\)}
& \multicolumn{3}{c}{Exact-unique} \\
\cmidrule(lr){2-3}\cmidrule(lr){4-6}
Dataset
& \textsc{PrivCert}-Hist
& \textsc{PrivCert}-PF
& \textsc{PrivCert}-SVT
& \textsc{PrivCert}-PF
& \textsc{PrivCert}-PF (Gaus) \\
\midrule
TAB
& 0.199 & 0.404
& 0.400 & 0.422 & 0.444 \\
WildChat
& 0.183 & 0.220
& 0.151 & 0.151 & 0.162 \\
Yelp
& 0.390 & 0.749
& 0.796 & 0.816 & 0.833 \\
\bottomrule
\end{tabular}
\end{table}

We next evaluate alternative instantiations of the \textsc{PrivCert} framework.
\textsc{PrivCert}-Hist uses a contribution-bounded joint support release, \textsc{PrivCert}-SVT combines private sparse selection with subsequent certification, and a Gaussian variant of \textsc{PrivCert}-PF replaces the
pure-DP Laplace mechanism with an analytic Gaussian mechanism under \((\varepsilon,\delta)\)-DP.
Table~\ref{tab:alternative-mechanisms} reports EmitRate; all corresponding UnsupportedEmit values remain below the target \(\beta=0.05\).
Full mechanism definitions and privacy accounting are given in Appendix~\ref{app:alternative-backends}.

\textsc{PrivCert}-Hist changes sensitivity through contribution bounding, while \textsc{PrivCert}-SVT privately selects a bounded candidate set before certification, so Theorem~\ref{thm:precision-coverage} does not apply unchanged. 
The Gaussian variant changes only the DP backend.
Together, these matched comparisons show that the \textsc{PrivCert} contract is compatible with different evidence constructions and privacy mechanisms.

\textbf{A boundary case: private proxies.}
An Aug-PE~\citep{xie2024differentially}-style DP synthetic-data backend illustrates a boundary of the \textsc{PrivCert} contract: \(f_s(\widetilde R)\ge\theta\) does not imply \(f_s(R)\ge\theta\).
At \(\theta=0.10\), the synthetic backend attains similar EmitRate (\(0.676\) versus \(0.695\) for direct certification) but an original-data UnsupportedEmit of \(0.092\), compared with \(0.001\) for \textsc{PrivCert}-PF.
Original-data certification therefore requires an additional fidelity guarantee (Appendix~\ref{app:synthetic-data}).

\textbf{Takeaway.}
The \textsc{PrivCert} contract is not tied to a particular DP mechanism, but what is certified still depends on the reference data: support in a private proxy does not by itself certify support in the original dataset.
\section{Discussion and Limitations}

\textbf{What \textsc{PrivCert} certifies.}
\textsc{PrivCert} guarantees are relative to the declared reference data and support function \(f_s\), not matcher-independent semantic truth.
Different support models can preserve mechanism-level honesty while changing coverage and certified content (Appendix~\ref{app:matcher-robustness}).
Support measured on DP synthetic data certifies only that proxy absent an additional fidelity guarantee, and statements never proposed cannot be certified.

\textbf{Honesty does not imply completeness.}
Privacy noise, conservative certification, and limited discovery may cause abstention, and statement-level certificates do not guarantee logical consistency of the complete report.
\section{Conclusion}

We formulated the \emph{evidence gap} in privacy-preserving text generation: differential privacy protects individual records but does not specify what evidence a released statement carries about the underlying data.
We introduced \textsc{PrivCert}, a reporting framework that makes statement support explicit through privacy-preserving certificates and emit-or-abstain decisions.
We provided theoretical grounding for this framework by characterizing fundamental privacy--honesty limits, showing that sharply calibrated \textsc{PrivCert}-PF operates near the single-statement optimum, and establishing a worst-case cost for fine-grained multi-statement certification.
Experiments across multiple datasets and DP mechanisms empirically characterize the evidence gap and demonstrate alternative realizations of the reporting contract while clarifying its scope and boundaries.
Together, these results recast privacy-preserving reporting as an \emph{evidence-design problem}: not only \emph{how to generate private text}, but \emph{what a private report can certify about the data, and at what privacy cost}.

\bibliography{references}
\bibliographystyle{abbrv}

\clearpage

\appendix
\section*{Appendix}
\section{Proofs for Section~\ref{sec:limits}}
\label{app:proofs-limits}

\subsection{Proof of Theorem~\ref{thm:testing-lb} and Support-Estimation Consequence}

Let
\(
P_i:=\mathcal L(\mathcal A(q,R_i)),
\,
t:=e^{k\varepsilon}.
\)
By group privacy, for every measurable event \(E\),
\[
P_0(E)\le tP_1(E),
\qquad
P_1(E)\le tP_0(E).
\]

To bound total variation, fix an event \(B\) with \(P_0(B)\ge P_1(B)\), and write \(p:=P_0(B)\), \(q:=P_1(B)\), and \(\Delta:=p-q\).
Applying the two group-privacy inequalities to \(B\) and \(B^c\) gives
\[
\Delta\le (t-1)q,
\qquad
\Delta\le \frac{t-1}{t}(1-q).
\]
The minimum of these two bounds is maximized at \(q=1/(t+1)\), hence
\[
d_{\mathrm{TV}}(P_0,P_1)
\le
\frac{t-1}{t+1}
=
\frac{e^{k\varepsilon}-1}{e^{k\varepsilon}+1}.
\]
The standard binary-testing identity then gives, for every possibly
randomized test \(\varphi:\mathcal Y\to[0,1]\),
\[
\mathbb E_{P_0}[\varphi(Y)]
+
\mathbb E_{P_1}[1-\varphi(Y)]
\ge
1-d_{\mathrm{TV}}(P_0,P_1)
\ge
\frac{2}{e^{k\varepsilon}+1},
\]
which proves Theorem~\ref{thm:testing-lb}.

For the support-estimation consequence, let
\(a_i:=f_s(R_i)\) and
\(\Delta_s:=|a_1-a_0|\), and assume without loss of generality that
\(a_1>a_0\).
Given any estimator \(\hat f_s:\mathcal Y\to[0,1]\), define
\[
\varphi(Y)
:=
\mathbf 1
\left\{
\hat f_s(Y)\ge\frac{a_0+a_1}{2}
\right\}.
\]
If \(\varphi(Y)=1\) under \(P_0\), or \(\varphi(Y)=0\) under \(P_1\), the estimation error is at least \(\Delta_s/2\).
Therefore,
\[
\begin{aligned}
&
\mathbb E_{P_0}
\bigl[|\hat f_s(Y)-a_0|\bigr]
+
\mathbb E_{P_1}
\bigl[|\hat f_s(Y)-a_1|\bigr]
\\
&\qquad\ge
\frac{\Delta_s}{2}
\left(
\mathbb E_{P_0}[\varphi(Y)]
+
\mathbb E_{P_1}[1-\varphi(Y)]
\right).
\end{aligned}
\]
Since the maximum of two nonnegative quantities is at least half their sum,
Theorem~\ref{thm:testing-lb} implies
\[
\max_{i\in\{0,1\}}
\mathbb E_{Y\sim P_i}
\left[
|\hat f_s(Y)-a_i|
\right]
\ge
\frac{\Delta_s}{2(e^{k\varepsilon}+1)}.
\]

\subsection{Proof of Theorem~\ref{thm:sharp-frontier}}
\label{app:sharp-frontier}

Fix a statement \(s\), and write
\[
p_{\mathcal A}(R)
:=
\Pr[a_s=\mathrm{emit}\mid R].
\]
For adjacent datasets \(R\sim R'\), differential privacy applied to the emission event and to the complementary abstention event gives
\[
p_{\mathcal A}(R')
\le
e^\varepsilon p_{\mathcal A}(R),
\]
and
\[
p_{\mathcal A}(R')
\le
1-e^{-\varepsilon}
\bigl(1-p_{\mathcal A}(R)\bigr).
\]
Thus, if one additional supporting record is introduced,
\[
p_{\mathcal A}(R')
\le
g_\varepsilon\!\left(p_{\mathcal A}(R)\right),
\]
where
\[
g_\varepsilon(p)
:=
\min\left\{
e^\varepsilon p,\,
1-e^{-\varepsilon}(1-p)
\right\}.
\]
The map \(g_\varepsilon\) is nondecreasing.

\textbf{Upper bound.}
Let
\(
\pi_0:=\beta,
\,
\pi_{k+1}:=g_\varepsilon(\pi_k).
\)
We prove by induction on \(d_s(R)\) that
\[
p_{\mathcal A}(R)\le\pi_{d_s(R)}.
\]

For \(d_s(R)=0\), the statement is unsupported, so the honesty condition gives
\[
p_{\mathcal A}(R)\le\beta=\pi_0.
\]
For \(d_s(R)=k+1\), replace one supporting record by a fixed non-supporting record to obtain an adjacent dataset \(R^{-}\) with
\(d_s(R^{-})=k\).
Then
\[
p_{\mathcal A}(R)
\le
g_\varepsilon\!\left(p_{\mathcal A}(R^{-})\right)
\le
g_\varepsilon(\pi_k)
=
\pi_{k+1},
\]
where the second inequality uses the induction hypothesis and monotonicity of \(g_\varepsilon\).

\textbf{Tightness.}
Consider the count-based mechanism \(\mathcal A^\star\) that emits \(s\) with probability
\[
p^\star(R):=\pi_{d_s(R)}
\]
and otherwise abstains.
Adjacent datasets have \(d_s\)-values differing by at most one.
For consecutive \(k\) and \(k+1\), the recurrence gives
\[
\pi_{k+1}\le e^\varepsilon\pi_k,
\qquad
1-\pi_k
\le
e^\varepsilon(1-\pi_{k+1}).
\]
Because \(\pi_{k+1}\ge\pi_k\), the reverse-direction DP inequalities hold as well.
Hence the binary mechanism is \(\varepsilon\)-DP.

For unsupported datasets \(d_s(R)=0\), so \(p^\star(R)=\beta\), establishing \(\beta\)-honesty.
Thus the frontier is attained simultaneously at every support count.
Attaching \(c_s=\theta\) upon emission and \(c_s=0\) otherwise also satisfies the \textsc{PrivCert} certificate-validity condition.

\textbf{Half-emission consequence.}
Since
\(
g_\varepsilon(p)\le e^\varepsilon p,
\)
we have
\(
\pi_k\le\beta e^{k\varepsilon}.
\)
Therefore, emission probability at least \(1/2\) requires
\[
d_s(R)
\ge
\left\lceil
\frac{1}{\varepsilon}
\log\frac{1}{2\beta}
\right\rceil,
\]
which is the margin lower bound used in Corollary~\ref{cor:pf-near-optimal}.

\subsection{Extension to Approximate Differential Privacy}
\label{app:approx-dp}

The local indistinguishability results in Section~\ref{subsec:implicit-limits} extend to \((\varepsilon,\delta)\)-DP.
For datasets \(R_0,R_1\) with \(d(R_0,R_1)\le k\), define
\[
\delta_k
:=
\delta
\sum_{\ell=0}^{k-1}e^{\ell\varepsilon},
\qquad
t:=e^{k\varepsilon}.
\]
Approximate group privacy gives
\[
P_0(E)\le tP_1(E)+\delta_k,
\qquad
P_1(E)\le tP_0(E)+\delta_k
\]
for every measurable event \(E\).

For an event \(B\) with \(P_0(B)\ge P_1(B)\), writing \(q:=P_1(B)\) and \(\Delta:=P_0(B)-P_1(B)\), these inequalities imply
\[
\Delta
\le
(t-1)q+\delta_k,
\qquad
\Delta
\le
\frac{t-1}{t}(1-q)+\frac{\delta_k}{t}.
\]
Maximizing the smaller of the two bounds yields
\[
d_{\mathrm{TV}}(P_0,P_1)
\le
\min\left\{
1,\,
\frac{t-1+2\delta_k}{t+1}
\right\}.
\]
Consequently, every possibly randomized binary test satisfies
\[
\mathbb E_{P_0}[\varphi(Y)]
+
\mathbb E_{P_1}[1-\varphi(Y)]
\ge
\left[
\frac{2(1-\delta_k)}
{e^{k\varepsilon}+1}
\right]_+,
\]
where \([x]_+:=\max\{x,0\}\).

Applying the same two-point reduction as above, with
\(a_i:=f_s(R_i)\) and
\(\Delta_s:=|a_1-a_0|\), gives
\[
\max_{i\in\{0,1\}}
\mathbb E_{Y\sim P_i}
\left[
|\hat f_s(Y)-a_i|
\right]
\ge
\left[
\frac{\Delta_s(1-\delta_k)}
{2(e^{k\varepsilon}+1)}
\right]_+.
\]
Setting \(\delta=0\) recovers the pure-DP bounds.
As in the pure-DP case, these local bounds may become weak when \(k\varepsilon\) or the accumulated slack \(\delta_k\) is large.
\section{Additional Details and Proofs for Section~\ref{sec:method}}
\label{app:proofs-method}

\subsection{\textsc{PrivCert}-PF Algorithm}
\label{app:pf-algorithm}

\begin{algorithm}[H]
\caption{\textsc{PrivCert}-PF}
\label{alg:proposal-filter}
\begin{algorithmic}[1]
\REQUIRE Query \(q\), dataset \(R\), proposal distribution \(\Pi_q\),
number of candidates \(m\), threshold \(\theta\), margin \(\tau\),
per-query privacy budget \(\varepsilon_0\)
\STATE Sample \(s_1,\dots,s_m \overset{\mathrm{i.i.d.}}{\sim}\Pi_q\)
\FOR{\(j=1,\dots,m\)}
    \STATE Compute
    \(\widehat f_{s_j}(R)=f_{s_j}(R)+Z_j\)
    using an \(\varepsilon_0\)-DP support query
    \STATE Set \(L_{s_j}=\widehat f_{s_j}(R)-\tau\)
    \IF{\(L_{s_j}\ge\theta\)}
        \STATE Emit \(s_j\) with certificate \(L_{s_j}\)
    \ELSE
        \STATE Abstain
    \ENDIF
\ENDFOR
\STATE Optionally render emitted statements as post-processing
\end{algorithmic}
\end{algorithm}

\subsection{Privacy and Honesty}

\textbf{Privacy accounting.}
The proposal distribution \(\Pi_q\) is independent of \(R\), so candidate sampling incurs no privacy cost.
Conditioned on the sampled candidates, if support query \(j\) is
\(\varepsilon_j\)-DP, sequential composition gives
\(
\left(\sum_{j=1}^m \varepsilon_j\right)\text{-DP}.
\)
The certificates, emit-or-abstain decisions, and any final rendering are post-processing and incur no additional privacy loss.

\textbf{Proof of Theorem~\ref{thm:honesty}.}
For an evaluated statement \(s\),
\[
L_s=f_s(R)+Z-\tau.
\]
Hence
\[
L_s>f_s(R)
\quad\Longrightarrow\quad
Z>\tau,
\]
so
\[
\Pr[L_s>f_s(R)]
\le
\Pr[Z\ge\tau]
\le
\beta.
\]
If \(f_s(R)<\theta\) and \(s\) is emitted, then
\[
f_s(R)+Z-\tau\ge\theta
\quad\Longrightarrow\quad
Z\ge\tau+\theta-f_s(R)>\tau,
\]
and therefore
\[
\Pr[\mathrm{Emit}_s]
\le
\Pr[Z\ge\tau]
\le
\beta.
\]
This proves certificate validity and per-candidate one-sided honesty.

\textbf{Report-level control.}
Let \(B_j\) denote the event that candidate \(j\) is unsupported and emitted.
Conditioned on the sampled candidates, Theorem~\ref{thm:honesty} gives \(\Pr[B_j]\le\beta_j\).
Thus
\[
\Pr\!\left[\bigcup_{j=1}^m B_j\right]
\le
\sum_{j=1}^m \beta_j.
\]
In particular, uniform \(\beta_j=\beta\) gives a report-level bound \(m\beta\).
No independence between candidate-level emission events is required.

\subsection{Proof of Theorem~\ref{thm:precision-coverage}}

By linearity of expectation and identical sampling from \(\Pi_q\),
\[
\mathrm{EmitRate}
=
\Pr_{s\sim\Pi_q,Z}
\bigl[
f_s(R)+Z-\tau\ge\theta
\bigr].
\]

For the upper bound, if \(f_s(R)<\theta\),
Theorem~\ref{thm:honesty} gives emission probability at most \(\beta\); otherwise we use the trivial bound \(1\).
Hence
\[
\begin{aligned}
\mathrm{EmitRate}
&\le
p_\theta(R)
+
\beta\bigl(1-p_\theta(R)\bigr)
\\
&\le
p_\theta(R)+\beta.
\end{aligned}
\]

For the lower bound, if
\[
f_s(R)\ge\theta+2\tau,
\]
then
\[
\theta+\tau-f_s(R)\le-\tau,
\]
so
\[
\Pr[s\text{ is emitted}\mid s]
=
\Pr[Z\ge\theta+\tau-f_s(R)]
\ge
\Pr[Z\ge-\tau]
\ge
1-\beta.
\]
Averaging over \(s\sim\Pi_q\) gives
\[
\mathrm{EmitRate}
\ge
(1-\beta)p_{\theta+2\tau}(R).
\]
Combining the bounds proves
\[
(1-\beta)p_{\theta+2\tau}(R)
\le
\mathrm{EmitRate}
\le
p_\theta(R)+\beta(1-p_\theta(R))
\le
p_\theta(R)+\beta.
\]

\subsection{Proof of Corollary~\ref{cor:pf-near-optimal}}

Let
\[
A
:=
\frac{1}{\varepsilon_s}
\log\frac{1}{2\beta},
\qquad
k^\star:=\lceil A\rceil.
\]
Theorem~\ref{thm:sharp-frontier} implies that emission probability \(1/2\) requires distance at least \(k^\star\).

For Laplace \textsc{PrivCert}-PF, define
\[
\rho:=\lceil n\theta\rceil-n\theta\in[0,1).
\]
At distance \(d\) from the largest unsupported support count,
\[
f_s(R)-\theta
=
\frac{d-1+\rho}{n}.
\]
With
\[
\tau
=
\frac{1}{n\varepsilon_s}
\log\frac{1}{2\beta}
=
\frac{A}{n},
\]
the emission probability is at least \(1/2\) iff \(f_s(R)-\theta\ge\tau\), equivalently
\[
d\ge A+1-\rho.
\]
Thus the first such distance is
\[
d_{\mathrm{PF}}
=
\lceil A+1-\rho\rceil
\in
\{k^\star,k^\star+1\}.
\]
The corresponding support count is \(\lceil n\theta+A\rceil\), which is at most \(n\) by assumption; hence this distance is attainable.

\subsection{Proof of Theorem~\ref{thm:fine-grained-cost}}
\label{app:proof-fine-grained}

We prove the finite-sample bound
\[
nw
\ge
\frac{
\kappa m\log 2+\log(1-8\beta)
}{
\varepsilon_{\mathrm{tot}}
}
-2,
\qquad
\kappa:=1-H_2(1/4),
\]
where \(H_2\) is binary entropy in bits.
For fixed \(0<\beta<1/8\) and any regime with \(\varepsilon_{\mathrm{tot}}=o(m)\), this implies
\[
w
=
\Omega\!\left(
\frac{m}{n\varepsilon_{\mathrm{tot}}}
\right).
\]

\textbf{Hard-instance construction.}
Let
\[
t:=\lceil n\theta\rceil-1,
\qquad
h:=\lceil n(\theta+w)\rceil,
\qquad
d:=h-t.
\]
Then
\[
\frac{t}{n}<\theta,
\qquad
\frac{h}{n}\ge\theta+w,
\]
and the feasibility assumption ensures \(h\le n\).

By the richness assumption, for every \(v\in\{0,1\}^m\) there is a record \(r_v\) whose support pattern is \(v\).
Let \(r_{\mathbf 0}\) and \(r_{\mathbf 1}\) denote the all-zero and all-one patterns.
Construct \(R^{(0)}\) from \(t\) copies of \(r_{\mathbf 1}\) and \(n-t\) copies of \(r_{\mathbf 0}\).
For each \(v\), obtain \(R^{(v)}\) by replacing \(d\) copies of \(r_{\mathbf 0}\) by \(r_v\).
Then
\[
d(R^{(0)},R^{(v)})\le d,
\]
and, for every \(j\),
\[
v_j=0
\Longrightarrow
f_{s_j}(R^{(v)})=\frac{t}{n}<\theta,
\qquad
v_j=1
\Longrightarrow
f_{s_j}(R^{(v)})=\frac{h}{n}\ge\theta+w.
\]

\textbf{Packing and decoding.}
By the Gilbert packing bound, there exists \(\mathcal C\subseteq\{0,1\}^m\) with pairwise Hamming distance greater than \(m/4\) and
\[
|\mathcal C|
\ge
\frac{2^m}{
\sum_{i=0}^{\lfloor m/4\rfloor}\binom mi
}
\ge
2^{\kappa m}.
\]

Let
\[
X_j:=\mathbf 1\{j\in O\},
\qquad
X=(X_1,\dots,X_m).
\]
For \(v\in\mathcal C\), fine-grained certification gives
\[
\Pr[X_j\neq v_j\mid R^{(v)}]\le\beta
\]
for every coordinate \(j\), and therefore
\[
\mathbb E[d_H(X,v)\mid R^{(v)}]
\le
m\beta.
\]
By Markov's inequality,
\[
\Pr\!\left[
d_H(X,v)<\frac{m}{8}
\,\middle|\,
R^{(v)}
\right]
\ge
1-8\beta.
\]
Define
\[
E_v
:=
\left\{
x\in\{0,1\}^m:
d_H(x,v)<\frac{m}{8}
\right\}.
\]
Because codewords in \(\mathcal C\) are more than \(m/4\) apart, the events \(\{E_v:v\in\mathcal C\}\) are pairwise disjoint.

\textbf{Applying differential privacy.}
Group privacy and \(d(R^{(0)},R^{(v)})\le d\) imply
\[
\Pr_{R^{(0)}}[X\in E_v]
\ge
e^{-d\varepsilon_{\mathrm{tot}}}
\Pr_{R^{(v)}}[X\in E_v]
\ge
(1-8\beta)e^{-d\varepsilon_{\mathrm{tot}}}.
\]
Summing over the disjoint decoding events,
\[
1
\ge
\sum_{v\in\mathcal C}
\Pr_{R^{(0)}}[X\in E_v]
\ge
|\mathcal C|
(1-8\beta)
e^{-d\varepsilon_{\mathrm{tot}}},
\]
hence
\[
d\varepsilon_{\mathrm{tot}}
\ge
\log|\mathcal C|
+
\log(1-8\beta)
\ge
\kappa m\log 2+\log(1-8\beta).
\]

Finally,
\[
d
=
\lceil n(\theta+w)\rceil
-\lceil n\theta\rceil+1
\le
\lceil nw\rceil+1
\le
nw+2.
\]
Therefore
\[
nw
\ge
\frac{
\kappa m\log 2+\log(1-8\beta)
}{
\varepsilon_{\mathrm{tot}}
}
-2,
\]
which proves Theorem~\ref{thm:fine-grained-cost}.
\section{Additional Experimental Details and Diagnostics}
\label{app:experimental-diagnostics}

This appendix provides implementation details and additional diagnostics for
Section~\ref{sec:experiments}. We first describe the evaluation protocol, support model, proposal mechanisms, and privacy accounting, and then report additional comparisons that clarify the scope of the \textsc{PrivCert} guarantees.

\subsection{Experimental Protocol and Metrics}
\label{app:metrics-synthetic}

\textbf{Metrics.}
For each proposal-and-filter repetition \(k\), we compute
\[
\widehat{\mathrm{UnsupportedEmit}}_k
=
\frac{
\sum_j
\mathbf 1\{
s_{kj}\text{ emitted},
f_{s_{kj}}(R)<\theta
\}
}{
\sum_j
\mathbf 1\{
f_{s_{kj}}(R)<\theta
\}
},
\]
and report the mean across the \(S\) independent repetitions.
This is the empirical analogue of the per-candidate quantity controlled by Theorem~\ref{thm:honesty}.

We also report the emitted fraction
\[
\widehat{\mathrm{EmitRate}}_k
=
\frac{1}{m}
\sum_j
\mathbf 1\{s_{kj}\text{ emitted}\},
\]
averaged across repetitions, together with
\[
\widehat{\mathrm{FalseEmission}}
=
\frac{
\sum_j
\mathbf 1\{
s_j\text{ emitted},
f_{s_j}(R)<\theta
\}
}{
\sum_j
\mathbf 1\{s_j\text{ emitted}\}
},
\qquad
\widehat{\mathrm{MeanSupport}}
=
\frac{
\sum_j
\mathbf 1\{s_j\text{ emitted}\}
f_{s_j}(R)
}{
\sum_j
\mathbf 1\{s_j\text{ emitted}\}
}.
\]
FalseEmission and MeanSupport are descriptive diagnostics rather than quantities directly controlled by Theorem~\ref{thm:honesty}.
For free-text baselines, they are computed over parsed released claims.

\textbf{Synthetic setup.}
The synthetic experiment isolates the private filter from semantic-matcher and proposal effects. We construct a finite statement universe \(\mathcal S=\{s_1,\dots,s_M\}\), assign each statement an oracle support value, and sample candidates uniformly from \(\mathcal S\).
Support values are drawn from a two-component mixture containing lower- and higher-support statements.

We sweep
\(
\varepsilon_{\mathrm{tot}}\in\{1,4,10,20\},
\,
\theta\in\{0.05,0.1,0.2\},
\,
m\in\{50,200,1000\},
\,
\beta\in\{0.01,0.05,0.1\},
\)
with five seeds per configuration.

\textbf{Repetitions and hardware.}
Realistic proposal-and-filter experiments use five seeds per dataset--proposer pair and \(m=200\) candidates per seed.
Most free-text baselines use three seeds; the WildChat/\textsc{InvisibleInk} Qwen configuration uses five.
Experiments were run on Microsoft Azure VMs with NVIDIA H100 GPUs. Proposal-and-filter logs record the candidate, true operational support, noise draw, certificate, and emission decision, which are sufficient to reconstruct the reported metrics and theorem checks.

\subsection{Support Evaluation, Proposals, and Privacy Accounting}
\label{app:support-model}

\textbf{Free-text claim extraction.}
Free-text baselines do not provide a fixed candidate pool or explicit emit-or-abstain decisions. We therefore extract atomic asserted statements from each released report using a fixed claim parser and score the resulting claims with the same support function used for \textsc{PrivCert}-PF.
Claim extraction operates only on already released text and is therefore post-processing of the DP mechanism.

\textbf{Operational support model.}
Our default support predicate is a frozen public NLI cross-encoder, \texttt{cross-encoder/nli-deberta-v3-base}~\citep{he2021deberta}.
For a record \(r\) and statement \(s\), \(r\) is used as the premise and \(s\) as the hypothesis.
Inputs are tokenized with the native tokenizer and truncated to a maximum joint length of \(384\) wordpiece tokens.

Let \(\ell_c(r,s)\) be the NLI logit for class \(c\). We define
\[
\sigma(r,s)
=
\mathbf 1
\left[
\arg\max_c\ell_c(r,s)=\text{entailment}
\right],
\qquad
f_s(R)
=
\frac{1}{|R|}
\sum_{r\in R}
\sigma(r,s).
\]
The matcher is fixed independently of the private corpus.
Because replacing one record changes at most one binary summand,
\(
\Delta_{f_s}
\le
\frac{1}{|R|}
\)
under replace-one adjacency.
This sensitivity statement is independent of the semantic accuracy of the matcher; all guarantees are relative to the resulting operational support function.

\textbf{Proposal mechanisms.}
The tag-template proposer samples structured values from a public schema over dimensions such as topic, aspect, and sentiment, and renders them through fixed templates.
The public-LLM proposer instead receives only a public task description and generates candidate dataset-level claims.
Both proposal mechanisms are fixed independently of \(R\), so they incur no privacy cost.
The proposer affects discovery and coverage but not the per-candidate honesty guarantee of the subsequent private verifier.

\textbf{Privacy accounting.}
All privacy-constrained methods in the realistic experiments target
\(
(\varepsilon_{\mathrm{tot}},\delta)
=
(10,10^{-5}).
\)
Pure-DP mechanisms have \(\delta=0\) and therefore also satisfy this target.

For proposal-and-filter,
\[
\varepsilon_0
=
\frac{\varepsilon_{\mathrm{tot}}}{m},
\qquad
Z_j\sim\operatorname{Lap}(b),
\qquad
b
=
\frac{m\Delta_f}{\varepsilon_{\mathrm{tot}}}.
\]
The main experiments use the conservative margin
\[
\tau
=
b\log\frac{1}{\beta},
\]
which makes both required one-sided Laplace tail probabilities at most
\(\beta\).

The document-level baseline privately releases a fixed set of aggregate statistics and renders them with a public LLM. 
\textsc{InvisibleInk} uses its published private-decoding accountant; the zCDP budget is calibrated to the common \((10,10^{-5})\)-DP target, with parallel composition across disjoint reference batches.

\begin{table}[t]
\centering
\small
\caption{\textbf{Privacy accounting for the main-table methods.}
Pure-DP mechanisms also satisfy the displayed approximate-DP target.}
\label{tab:privacy-accounting}
\begin{tabular}{lllll}
\toprule
Method & DP type & Composition unit & Sensitivity & Final guarantee \\
\midrule
\textsc{PrivCert}-PF
& pure DP
& \(m\) support queries
& \(\Delta_f\le1/|R|\)
& \(\varepsilon_{\mathrm{tot}}\)-DP \\
Document-level DP
& pure DP
& \(d_{\mathrm{doc}}\) aggregates
& \(\Delta_{\mathrm{doc}}\)
& \(\varepsilon_{\mathrm{tot}}\)-DP \\
\textsc{InvisibleInk}
& zCDP
& private decoding
& DClip sensitivity
& \((\varepsilon_{\mathrm{tot}},10^{-5})\)-DP \\
Unfiltered LLM
& none
& no private-data access
& n/a
& reference only \\
\bottomrule
\end{tabular}
\end{table}

\subsection{Alternative \textsc{PrivCert} Mechanisms}
\label{app:alternative-backends}

Section~\ref{subsec:alternative-evidence} compares \textsc{PrivCert}-PF with alternative evidence constructions while holding the candidate space, support semantics, threshold, and honesty target fixed.

\textbf{Contribution-bounded histogram.}
For candidates \(s_1,\dots,s_m\), define
\[
\mathbf f(R)
=
\frac{1}{n}
\sum_{r\in R}
\bigl(
\sigma(r,s_1),\dots,\sigma(r,s_m)
\bigr).
\]
Its \(\ell_1\)-sensitivity is at most \(m/n\).
If each record is instead allowed to contribute to at most \(c\) positive coordinates, the capped vector satisfies
\[
\Delta_1(\widetilde{\mathbf f}_c)
\le
\frac{\min(m,2c)}{n},
\]
and therefore uses Laplace scale
\[
b_c
=
\frac{\min(m,2c)}
{n\varepsilon_{\mathrm{tot}}}.
\]
Contribution bounding can reduce noise but may remove genuine support.
Because only positive contributions are removed, \(\widetilde f_{c,s}(R)\le f_s(R)\), so the resulting lower certificates remain conservative with respect to the original support function.
The main comparison uses \(c=16\).

\textbf{Sparse-vector select-then-certify.}
The SVT variant allocates
\(
\varepsilon_{\mathrm{tot}}
=
\varepsilon_{\mathrm{select}}
+
\varepsilon_{\mathrm{cert}},
\)
uses the first budget to select at most \(k\) candidates, and certifies each selected candidate independently with budget \(\varepsilon_{\mathrm{cert}}/k\).
The final certification stage therefore retains the one-sided honesty argument of Theorem~\ref{thm:honesty}.
Theorem~\ref{thm:precision-coverage} does not apply unchanged because SVT introduces dependent selection, a shared noisy threshold, early stopping, and a cap on positive outputs. 
We sweep
\(
k\in\{5,10,20,40\},
\qquad
\frac{\varepsilon_{\mathrm{select}}}
{\varepsilon_{\mathrm{tot}}}
\in\{0.25,0.50,0.75\}.
\)

\textbf{Gaussian backend.}
To verify that the reporting interface is not specific to pure-DP Laplace noise, we also use
\[
\widehat f_s(R)
=
f_s(R)+Z_s,
\qquad
Z_s\sim\mathcal N(0,\sigma_G^2),
\]
with \(\sigma_G\) calibrated so that the complete multi-statement release satisfies \((\varepsilon_{\mathrm{tot}},\delta)=(10,10^{-5})\)-DP.
The certificate margin is
\[
\tau_G
=
\sigma_G\Phi^{-1}(1-\beta),
\]
so that
\[
\Pr[Z_s\ge\tau_G]=\beta.
\]
The same one-sided argument as in Theorem~\ref{thm:honesty} therefore applies.

These alternatives demonstrate that the \textsc{PrivCert} contract is not specific to a particular DP primitive; their coverage differs because they exploit different sensitivity and selection structures.

\subsection{Proposer Comparison}
\label{app:proposer-comparison}

We hold the support model, private verifier, privacy budget, and number of candidates fixed and change only the proposal distribution.
Both proposers use
\(
(\varepsilon_{\mathrm{tot}},\delta)=(10,10^{-5}),
\,
m=200,
\)
with five seeds per dataset.

\begin{table}[t]
\centering
\footnotesize
\caption{\textbf{Proposer comparison on realistic corpora.}
Only the proposal distribution changes; the private verifier is fixed.}
\label{tab:proposer-comparison}
\begin{tabular}{llrrrrr}
\toprule
Dataset & Proposer & \#out & EmitRate & UnsupEmit &
FalseEmit & MeanSupport \\
\midrule
\multirow{2}{*}{TAB}
 & public LLM   &   3 & 0.003 & 0.003 & 1.000 & 0.006 \\
 & tag-template & 404 & 0.404 & 0.008 & 0.010 & 0.545 \\
\midrule
\multirow{2}{*}{WildChat}
 & public LLM   &  24 & 0.024 & 0.001 & 0.042 & 0.132 \\
 & tag-template & 220 & 0.220 & 0.004 & 0.014 & 0.288 \\
\midrule
\multirow{2}{*}{Yelp}
 & public LLM   &  18 & 0.018 & 0.002 & 0.111 & 0.124 \\
 & tag-template & 749 & 0.749 & 0.000 & 0.000 & 0.239 \\
\bottomrule
\end{tabular}
\end{table}

UnsupportedEmit remains below \(0.01\) across all six cells, while coverage changes substantially.
This illustrates the distinction between discovery and certification: changing the proposer changes which certifiable statements are found without changing the verifier's one-sided honesty property.

\subsection{DP Synthetic Data as a Boundary Case}
\label{app:synthetic-data}

DP synthetic data provides a different reporting contract: a private synthetic corpus \(\widetilde R\) is released first, and subsequent reporting is performed on that proxy rather than directly on \(R\).

We evaluate an Aug-PE-style backend on Yelp using the same private split of \(1800\) records, public statement schema, semantic matcher, and reporting thresholds.
The synthetic pipeline follows the main structure of Private Evolution~\citep{xie2024differentially}, using private voting, Gaussian-noised selection, and LLM-based variation.
The best of four evaluated configurations uses population size \(N=1000\).

The deployable synthetic-data rule emits a statement when
\(
f_s(\widetilde R)\ge\theta.
\)
For evaluation only, we additionally compute its original-corpus support \(f_s(R)\); this quantity is never used by the mechanism.
We report
\[
\mathrm{SupportedRecall}
=
\frac{
|\{s:f_s(R)\ge\theta,\ s\text{ emitted}\}|
}{
|\{s:f_s(R)\ge\theta\}|
}.
\]

\begin{table}[t]
\centering
\small
\caption{\textbf{Matched Yelp comparison with a DP synthetic-data backend.}
UnsupportedEmit for synthetic data is evaluated against support in the original private corpus.}
\label{tab:synthetic-data-backend}
\begin{tabular}{llrrrr}
\toprule
\(\theta\) & Method & EmitRate & UnsupEmit &
Runs w/ any unsup. & SupportedRecall \\
\midrule
0.05 & \textsc{PrivCert}-PF
& 0.823 & 0.000 & \(0/150\) & 0.979 \\
0.05 & Aug-PE-style (\(N=1000\))
& 0.800 & 0.000 & \(0/5\) & 0.952 \\
\midrule
0.10 & \textsc{PrivCert}-PF
& 0.695 & 0.001 & \(2/150\) & 0.938 \\
0.10 & Aug-PE-style (\(N=1000\))
& 0.676 & 0.092 & \(4/5\) & 0.881 \\
\bottomrule
\end{tabular}
\end{table}

At \(\theta=0.10\), the synthetic backend achieves similar coverage but substantially larger original-data UnsupportedEmit.
The repetition units differ---a synthetic-data seed regenerates the complete proxy, whereas a proposal-and-filter repetition resamples final DP noise---so the run-level counts should not be interpreted as matched failure-probability estimates.

The distinction is semantic.
Support in \(\widetilde R\) certifies the proxy, not automatically the original dataset.
An original-data lower certificate would require an additional fidelity guarantee such as
\[
\Pr\left[
\sup_{s\in\mathcal S}
\bigl(
f_s(\widetilde R)-f_s(R)
\bigr)
\le\gamma_+
\right]
\ge
1-\eta.
\]
Under this condition,
\(
f_s(\widetilde R)-\gamma_+\ge\theta
\)
would imply \(f_s(R)\ge\theta\) except with probability at most \(\eta\).
An oracle post-hoc correction based on original-corpus supports removes unsupported emissions in this experiment but reduces EmitRate to approximately \(0.14\)--\(0.28\); because it inspects the private-data oracle, it is not a deployable mechanism.

\subsection{Robustness to the Support Model}
\label{app:matcher-robustness}

The guarantees of \textsc{PrivCert} are defined relative to a fixed support function \(f_s\).
We therefore separate two questions: whether the private certification mechanism remains well calibrated when the support model changes, and whether the identities of supported statements are stable across support models.

\textbf{Alternative NLI matchers.}
We repeat proposal-and-filter with three frozen public NLI matchers: the default DeBERTa cross-encoder, a second DeBERTa model trained on MNLI/FEVER/ANLI, and BART-large-MNLI.
All are fixed independently of the private corpus.

Observed UnsupportedEmit remains below the target \(\beta\) for every matcher, consistent with Theorem~\ref{thm:honesty}.
The emitted content can nevertheless differ substantially.
On Yelp, \(92.7\%\) of the union of statements emitted by any matcher is shared by all three; on WildChat, only \(19.3\%\) is shared by all three and \(39.0\%\) is emitted by only one.
Thus mechanism-level honesty is distinct from semantic agreement between support models.

\textbf{Qwen3-8B as the support model.}
We additionally replace the NLI predicate with a frozen Qwen3-8B judge.
The model and prompt are fixed independently of the private corpus, thinking is disabled, and each record--statement pair is deterministically mapped to a binary support decision.
The resulting empirical-frequency support retains replace-one sensitivity at most \(1/|R|\).

Importantly, we rerun \textsc{PrivCert}-PF using the Qwen-defined support function rather than merely re-scoring outputs produced under the NLI support model.

\begin{table}[t]
\centering
\footnotesize
\caption{\textbf{Robustness of \textsc{PrivCert}-PF to a Qwen3-8B support model.}
Qwen3-8B replaces the default NLI matcher as the operational support predicate.
``In interval'' reports finite-run consistency with the expectation interval of Theorem~\ref{thm:precision-coverage}.}
\label{tab:qwen-support-robustness}
\begin{tabular}{lrrc}
\toprule
Dataset
& UnsupportedEmit
& NLI--Qwen Jaccard
& In interval \\
\midrule
TAB      & 0.0033 & 0.94 & 5/5 \\
WildChat & 0.0023 & 0.33 & 5/5 \\
Yelp     & 0.0037 & 0.78 & 5/5 \\
\bottomrule
\end{tabular}
\end{table}

UnsupportedEmit remains well below \(\beta=0.05\), and all \(15/15\) finite-run EmitRates are empirically consistent with the interval predicted for the expectation by Theorem~\ref{thm:precision-coverage}.
The emitted sets nevertheless differ, especially on WildChat.
Thus the mechanism-level behavior persists under an LLM-based support model, while coverage and certified content remain support-model dependent.

\textbf{Cross-evaluation of frozen free-text outputs.}
The previous experiment changes the support function and reruns the certification mechanism.
We separately test whether the high FalseEmission of free-text baselines in
Table~\ref{tab:false-emission-main} is specific to the default NLI evaluator.

For this diagnostic, released claims are held fixed and only the evaluator is changed from NLI to the same frozen Qwen3-8B judge.
Seven of the fifteen dataset--method cells could be recovered exactly.
Each recovered cell first reproduces its original Table~\ref{tab:false-emission-main}
NLI FalseEmission before Qwen scoring.
No model outputs were regenerated to fill unavailable cells.

\begin{table}[t]
\centering
\footnotesize
\caption{\textbf{Cross-evaluation of exactly recoverable frozen
Table~\ref{tab:false-emission-main} outputs.}
Released claims are fixed and only the operational support evaluator changes from NLI to Qwen3-8B.
Qwen FalseEmission is a descriptive diagnostic and carries no \(\beta\)-guarantee.}
\label{tab:qwen-cross-eval}
\begin{tabular}{llrr}
\toprule
Dataset & Method & NLI FalseEmit & Qwen FalseEmit \\
\midrule
\multirow{2}{*}{TAB}
 & Document-level DP          & 0.894 & 0.470 \\
 & \textsc{PrivCert}-PF       & 0.010 & 0.040 \\
\midrule
\multirow{3}{*}{WildChat}
 & Document-level DP          & 0.729 & 0.857 \\
 & \textsc{InvisibleInk} (Qwen)
                                & 0.619 & 0.929 \\
 & \textsc{PrivCert}-PF       & 0.014 & 0.241 \\
\midrule
\multirow{2}{*}{Yelp}
 & Document-level DP          & 0.803 & 0.239 \\
 & \textsc{PrivCert}-PF       & 0.000 & 0.144 \\
\bottomrule
\end{tabular}
\end{table}

Among the exactly recoverable cells, the ordering between \textsc{PrivCert}-PF and the available free-text baselines is preserved on all three datasets.
The gap remains large on TAB and WildChat but narrows substantially on Yelp.
Thus the qualitative evidence-gap diagnostic is not explained solely by the default NLI matcher, while its magnitude is clearly support-model dependent.

The Qwen scores assigned to the frozen \textsc{PrivCert}-PF outputs do not carry the original \(\beta\)-honesty guarantee because those outputs were certified under the NLI-defined support function.
This cross-evaluation should therefore be interpreted only as a robustness diagnostic.
The remaining Table~\ref{tab:false-emission-main} cells are omitted because  their exact frozen claims were unavailable.

\subsection{Whole-Report Guarantee}
\label{app:family-wise}

The main experiments use the per-candidate guarantee of Theorem~\ref{thm:honesty}.
If instead the complete report must contain no unsupported statement except with probability at most \(\beta_{\mathrm{tot}}\), the union bound in Appendix~\ref{app:proofs-method} allows failure targets satisfying
\(
\sum_{j=1}^m\beta_j
\le
\beta_{\mathrm{tot}}.
\)
With uniform allocation,
\(
\beta_j
=
\frac{\beta_{\mathrm{tot}}}{m}.
\)
For the conservative Laplace calibration used in the main experiments, the
certificate margin becomes
\[
\tau_{\mathrm{FW}}
=
\frac{m\Delta_f}{\varepsilon_{\mathrm{tot}}}
\log\frac{m}{\beta_{\mathrm{tot}}},
\]
which is larger than the per-candidate margin and therefore reduces coverage.

\begin{table}[t]
\centering
\small
\caption{\textbf{Coverage cost of whole-report control.}
The family-wise setting uses
\(\beta_j=\beta_{\mathrm{tot}}/m\) with
\(\beta_{\mathrm{tot}}=0.05\).}
\label{tab:family-wise}
\begin{tabular}{lrr}
\toprule
Dataset &
Per-candidate EmitRate &
Family-wise EmitRate \\
\midrule
TAB      & 0.404 & 0.357 \\
Yelp     & 0.749 & 0.565 \\
WildChat & 0.220 & 0.169 \\
\bottomrule
\end{tabular}
\end{table}

The stronger report-level guarantee reduces coverage on all three datasets.
No unsupported report is observed in these evaluated runs, but the formal statement remains
\[
\Pr[\text{any unsupported emission}]
\le
0.05.
\]
Thus per-candidate calibration provides the higher-coverage operating point used in the main evaluation, while family-wise calibration is available when the guarantee must apply to the complete released report.

\end{document}